\renewcommand{\epsilon}{\ensuremath\varepsilon}
\renewcommand{\phi}{\ensuremath{\varphi}}
\newcommand{\Lagr}{\mathcal{L}}
\renewcommand{\paragraph}[1]{\noindent \textbf{#1}.}
\newenvironment{proof}{\textit{Proof}.}{\hfill$\square$}
\newtheorem{theorem}{Theorem}
\newtheorem{corollary}{Corollary}
\title{The Transactional Conflict Problem}
\author{Dan Alistarh \\ IST Austria
\and Syed Kamran Haider \\ UConn
\and Raphael K{\"u}bler \\ ETH Zurich
\and Giorgi Nadiradze \\ ETH Zurich
}
\date{}
\begin{document}

\maketitle


\begin{abstract}
  The \emph{transactional conflict} problem arises in transactional systems whenever two or more concurrent transactions clash on a data item. 
  While the standard solution to such conflicts is to immediately abort one of the transactions, some practical systems consider the alternative of \emph{delaying} conflict resolution for a short interval, which may allow one of the transactions to commit. The challenge in the transactional conflict problem is to choose the optimal length of this delay interval so as to minimize the overall running time penalty for the conflicting transactions. 

   In this paper, we propose a family of optimal online algorithms for the transactional conflict problem. 
   Specifically, we consider variants of this problem which arise in different implementations of transactional systems, namely ``requestor wins'' and ``requestor aborts'' implementations:  
   in the former, the recipient of a coherence request is aborted, whereas in the latter, it is the requestor which has to abort. 
   Both strategies are implemented by real systems. 
   We show that the \emph{requestor aborts} case can be reduced to a classic instance of the ski rental problem, while the \emph{requestor wins} case leads to a \emph{new} version of this classical problem, for which we derive optimal deterministic and randomized algorithms. 
   Moreover, we prove that, under a simplified adversarial model, our algorithms are constant-competitive with the offline optimum in terms of throughput.
   We validate our algorithmic results empirically through a hardware simulation of hardware transactional memory (HTM), showing that our algorithms can lead to non-trivial performance improvements for classic concurrent data structures.  
\end{abstract}


\newcommand{\package}{\emph}
\DeclarePairedDelimiter\floor{\lfloor}{\rfloor}

\section{Introduction}

Several interesting tools and techniques have been developed to address the challenge of making correct decisions about the future while holding limited information, also known as \emph{online decision-making}~\cite{kalai_vempala, albers_scheduling, self_organizing, albers_caching, kserver}. In this context, the \emph{ski rental problem}~\cite{karlin} is the question of choosing between continuing to pay a recurring cost, and paying a larger one-time cost which eliminates or reduces the recurring cost.  
The standard setting of the problem is as follows. A person goes skiing for an unknown number of days $D$. On each day he or she can decide whether or not to buy skis for a fixed price $B$. If the person decides not to buy the skis, he or she has to rent them for a fixed cost per day $i < B$, where $i$ is usually equal to 1. The challenge is to find an algorithm that minimizes the expected cost of the ski tour over the number of days $D$. 

In this paper, we phrase the problem of resolving conflicts efficiently in hardware transactional systems~\cite{herlihy_moss} as an online decision problem. 
We abstract this task as the following \emph{transactional conflict} problem. 
Consider a database or hardware transactional system, which allows concurrent executions of transactions. 
The following scenario often arises in practice: a transaction $T_1$ accesses a set of data items, and is assigned ownership of a subset of these data items (usually the items in its write set). 
Assume that $T_1$ is executing concurrently with another transaction $T_2$, whose dataset intersects with that of $T_1$. 
In this case, this conflict of ownership is detected by the transactional system at runtime, and the transactional implementation usually makes one of two choices to resolve the conflict. 
The first strategy, usually called \emph{requestor wins}~\cite{stanford_tm, req_wins}, will have the transaction $T_2$ abort the transaction $T_1$, while $T_2$ takes ownership of the contended data item. 
The second strategy, called \emph{requestor aborts}~\cite{stanford_tm}, has transaction $T_2$ abort, resolving the conflict in favor of $T_1$. 

To mitigate the high number of aborts caused by such conflicts, a number of hardware proposals, e.g.~\cite{req_wins, lease_release} allow the following strategy: instead of aborting one of the transactions immediately, we allow the transaction a \emph{grace period} $\Delta$, \emph{before} aborting it. 
From the algorithmic perspective, assume our objective is to maximize the likelyhood of a commit, while minimizing the expected value of the delay added to the running time of the two transactions. 
By introducing a grace period, we increase the running time of the delayed transaction $T_2$. 
Yet, since aborts are expensive, if $T_1$ commits during the grace period, we may gain in terms of the \emph{sum of runtime costs}, since we could significantly   \emph{decrease} the total running time of $T_1$ if we do not abort it. 
Clearly, given this cost model, the question of optimally setting the grace period $\Delta$ is an online decision problem. 

\paragraph{Contribution} Our results are as follows. 
\begin{itemize}
	\item We formalize the \emph{transactional conflict problem} for both the \emph{requestor wins} and the \emph{requestor aborts} conflict resolution strategies (Section~\ref{sec:def}). 
	\item We give optimal deterministic and randomized solutions for both variants (Section~\ref{sec:non_const}). Further, we  
	analyze the performance of both schemes in the setting where \emph{additional information} is provided, in the form of the mean of the underlying distribution of transaction lengths, and when \emph{chains} of transactions might conflict. 
	We provide closed-form optimal solutions in this case (Section~\ref{sec:constrained}).
	
	\item We show that, under an adversarial transaction scheduling model, our algorithms are \emph{globally} constant competitive in terms of the sum of  running times of the transactions, i.e.  inverse of throughput. (Section~\ref{sec:competitive}) Further, we discuss strategies to add probabilistic progress guarantees to our algorithms (Section~\ref{sec:progress}). 
	\item We validate our strategy on a hardware simulator implementation of HTM~\cite{herlihy_moss} and in synthetic tests, showing that it can yield  throughput improvements of up to $4\times$, and that it is competitive with a finely-tuned manual approach (Section~\ref{exp}).
\end{itemize} 

\paragraph{The Transactional Conflict Problem} In more detail, the transactional conflict problem presents the following trade-off. (Please see Figure~\ref{fig:conflict} for an illustration.) 
Assume transaction $T_1$ is interrupted by a conflicting transaction $T_2$, in a \emph{requestor wins} implementation. 
We have the choice of aborting $T_1$ immediately, or delaying the abort by a grace period $\Delta$, in the hope that the transaction will commit before $\Delta$ expires. 
If the transaction $T_1$ \emph{commits} after some additional time $x < \Delta$,  then we have added $x$ time steps to the running time of $T_2$; yet, since $T_1$ commits, we do not incur any additional time penalty for this transaction. 
If the transaction $T_1$ has not committed within the $\Delta$ grace period, then we will abort it. In this case, we wasted $2\Delta$ time units in terms of the total running time of the transactions: one $\Delta$ for which we have delayed $T_2$, and one $\Delta$ for which we have increased the total running time of $T_1$, without committing it. (Since $T_1$ will have to redo all its computation in case of abort, none of the executed steps is useful.) 
In addition, we assume that we always incur a (large) cost $B$ for having aborted the transaction. 

In sum, given the unknown remaining running time $x$ of the transaction $T_1$, the task is to compute the grace period $\Delta$ which optimizes the trade-off between the cost $x$ we pay in the case where $x < \Delta$,  and the cost $2\Delta + B$ which we pay in the case where $\Delta \leq x$. This problem clearly appears to be related to the ski rental problem, and it is tempting to think that its solution follows by simple reduction. 

However, this is not the case, due to the \emph{different structure of the cost function}. 
Perhaps surprisingly, we show that this difference significantly alters the optimal strategy, as well as the competitive ratio. 
For example, for two transactions and a \emph{requestor wins} implementation, we prove that the optimal strategy is  a \emph{uniform random} choice in the interval $[0, B)$, and its competitive ratio is $2$. 
By contrast, the optimal strategy for \emph{requestor aborts} implementations, which coincides with the classic ski rental problem, is the exponentially decaying probability distribution, with competitive ratio $e / (e - 1)$. 
This trade-off becomes more complex for conflict chains involving more than two transactions. We also consider this case in this paper. 

\paragraph{Extensions and Techniques} 
We consider and solve two natural extensions of the problem. 
The first is to consider larger conflict chains, when more than two distinct transactions may be involved in a conflict.  The second is the case where the length of each transaction is assumed to come from an arbitrary probability distribution, whose mean $\mu$ is \emph{known}. For instance, this corresponds to a profiler which records the empirical mean over all successful executions of a transaction, and uses this information when deciding the grace period length.  

%
%
%

On the technical side, we give a unified technique for optimally solving parametrized instances of this online decision problem; in particular,  we provide closed-form solutions for the general case with arbitrarily many conflicting transactions and mean constraints. 
Our main analytic tool is a non-trivial instance of the method of Lagrange multipliers tailored to this problem, building on an analysis by Khanafer et al.~\cite{constrained_skirental} for the ski rental problem. 

\paragraph{Order-Optimality} 
Starting from the competitiveness of local decisions, we show that, under a simplified adversarial conflict model for a system of $n$ threads executing concurrent transactions,  the algorithms we propose are globally competitive with an offline-optimal solution in terms of the sum of running times of transactions (intuitively, the inverse of throughput). 
Specifically, if the timing of conflicts between transactions is decided by an adversary, then we  prove that the running time penalty incurred by all transactions using our scheme is \emph{constant-competitive} with a perfect-information algorithm, which knows the remaining running time of each transaction at conflict time.

\paragraph{Experimental Results}
We validate our results empirically, using both a synthetic testbed, and an implementation of a requestor-wins hardware transactional memory (HTM) system on top of the MIT Graphite processor simulator~\cite{graphite}, executing data structure and transactional benchmarks. 
The synthetic experiments closely match our theoretical results. The HTM experiments suggest that adding delays can improve performance in HTM implementations under contention, and does not adversely impact performance in uncontended ones. 
Of note, we observe that our algorithms perform well even when compared with a finely-tuned approach, which uses knowledge about the application and implementation to manually fine-tune the delays.

\paragraph{Implications} 
The general problem of \emph{contention management} in the context of transactional memory has been considered before~\cite{ghp05}, and several efficient strategies are known, e.g.~\cite{ghp05, DSTM}. 
The main distinction from our setting is that contention managers (for instance in software TM) are usually assumed to have global knowledge about the set of running transactions, and possibly their abort history. By contrast, in our setting, decisions are entirely \emph{local}, as well as immediate and unchangeable.  
In this context, our algorithms implement a \emph{distributed, online} contention manager. 
We find it surprising that constant-competitive throughput bounds can be obtained in this restrictive setting. 

A second implication of our work concerns \emph{requestor wins} versus \emph{requestor aborts} conflict resolution. 
Most systems only implement one such strategy, but our analytic results show a trade-off between contention and the performance of these two paradigms. 
In particular, \emph{requestor aborts} is more efficient under low contention, whereas \emph{requestor wins} is more efficient when conflicts involve more than two transactions. This suggests that a hybrid strategy, which can alternate between the two, would perform best. 

\section{Related Work}

The ski rental problem was introduced by Karlin, Manasse, Rudolph, and Sleator in~\cite{karlin86}, where the authors give optimal deterministic $2$-competitive algorithms for the classic version of the ski rental problem. 
Follow-up work by Karlin et al.~\cite{karlin} showed that randomized algorithms can improve on this competitive ratio to $e / (e - 1)$, which is optimal~\cite{seiden}. 
Several variants of this problem have been studied, e.g.~\cite{albers_caching, karlin2001dynamic, dooly1998tcp, constrained_skirental}, with strong practical applications, e.g.~\cite{dooly1998tcp, constrained_skirental}. 
We refer the reader to the excellent survey by Albers~\cite{albers} for an overview of the area, and to Section~\ref{sec:background} for technical background on the problem. 

The work that is technically closest to ours is by Khanafer et al.~\cite{constrained_skirental}: they define a \emph{constrained} version of the ski rental problem, 
in which constraints such as the mean and variance of the underlying distribution from which the unknown parameter is chosen are added. 
The authors provide bounds on the improvement in terms of competitive ratio provided by these additional constraints. 
The basic technical ingredients we employ, notably the method of Lagrange multipliers, are similar to those of this work. 
However, we consider a significantly more complex instance of the problem, which requires a more involved analysis. 

The problem of designing efficient contention management in the context of transactional memory~\cite{herlihy_moss} has received significant research attention, especially since such systems are already present in hardware by major vendors~\cite{intel_rtm}. 
Several applied papers, e.g.~\cite{stanford_tm, req_wins_vs_aborts, req_wins, pathologies} discuss the trade-offs involved in implementing transactional  protocols in hardware, and the performance pathologies of such systems. 
Reference~\cite{req_wins} explicitly discusses the possibility of adding delays to hardware transactions, and implements and compares heuristics for tuning these delays. 
We believe we are the first to abstract the problem of setting the delays while minimizing running time impact, and to solve it optimally. 
As noted, the contention management problem has been abstracted by~\cite{ghp05} in the context of software TM (STM), and there is a considerable amount of work on this topic, e.g.~\cite{ghp05, DSTM, attiya, yoo2008}. 
However, to our understanding, all these systems assume a contention manager module with global knowledge about the set of running transactions, which is reasonable in the context of STM; 
by contrast, we analyze a setting where only local, immediate, and unchangeable  decisions are possible, which is the case for HTM.

\section{Model and Preliminaries}
\label{sec:model}
\subsection{Shared-Memory Transactions}

We consider an asynchronous shared-memory model in which $n$ processors communicate by performing atomic operations to shared memory. We assume that standard read-write atomic operations are available to the processors. 

We are interested in a setting where processors execute sequences of basic read-write operations \emph{atomically}, as \emph{transactions}. 
Specifically, a transaction is a set of read and write operations, which are guaranteed to be executed atomically if the transaction \emph{commits}. Otherwise, if the transaction \emph{aborts}, then none of the operations is applied to shared memory. 
Many hardware implementations of transactional memory use some version of the pattern shown in Algorithm 1 below.

\begin{algorithm}[h]
	\caption{Simplified Pseudocode for Hardware Transactional Memory implementation.}
	\label{fig:general-alg}
	\begin{algorithmic}[1]
		
		\Function{transaction}{code}
		\State // \emph{Use a MESI cache coherence protocol,  except each cache line has an additional bit. This additional bit is set if cache line is used by transaction. In this case, cache line is called transactional and it resides in the transactional cache. } 

		\State // \emph{Execution phase}
		\State \textbf{if} at any point transactional cache line is evicted, abort transaction.
		\State \textbf{if} transaction is aborted, invalidate all transactional cache lines.

		\For{each access in code} 
		\If{read}
			\State Try to acquire the transactional cache line in a shared or exclusive state.
			\State \textbf{Conflict} arises if some transaction has the required cache line in its transactional cache in a modified state.
		\EndIf
		\If{write} 
			\State Try to invalidate all existing copies of the required cache line, then change a state of the cache line to modified.
			 \State \textbf{Conflict} arises if some transaction has a copy of the line in its tranactional cache.	
		\EndIf
		\EndFor
		
		\State // \emph{Commit phase}

		\State \textbf{if} transaction is not aborted in the execution phase, then \textbf{commit}, by clearing additional bits in all transactional cache lines.
		
		\EndFunction 
	\end{algorithmic}
\end{algorithm}

Notice that, if several transactions with overlapping read and write sets may execute or attempt to commit at the same time, then they may conflict in the commit phase. In particular, consider the following example, involving transactional objects $A, B$ and $C$: 

A transaction $T_1$ performing $[ A \gets B + 1; C = 3 ]$ and a transaction $T_2$ performing $[A \gets B + 2]$ from an initial state of $A = B = 0$ can conflict in the commit phase as follows. Imagine $T_1$ holds variable $A$ in Exclusive state locally, and is in the process of acquiring variable $B$ in Exclusive state, so that it can commit. Then $T_2$ starts executing its Commit phase, and sends a coherence message to $T_1$, asking for $A$ in Exclusive state. At this point, $T_1$ has two choices: it can either relinquish exclusive access of $A$, and abort its transaction (see Figure \ref{fig:req_wins}), or delay the request by $A$ for some time, hoping that it will commit soon (see Figure \ref{fig:req_aborts}). 
In the following, we will look at algorithms for making this choice. Note that we will use both terms algorithm and strategy meaning the same thing from now on.

\begin{figure}
	\centering
	\begin{subfigure}[b]{0.45\textwidth}
		\includegraphics[width=\textwidth]{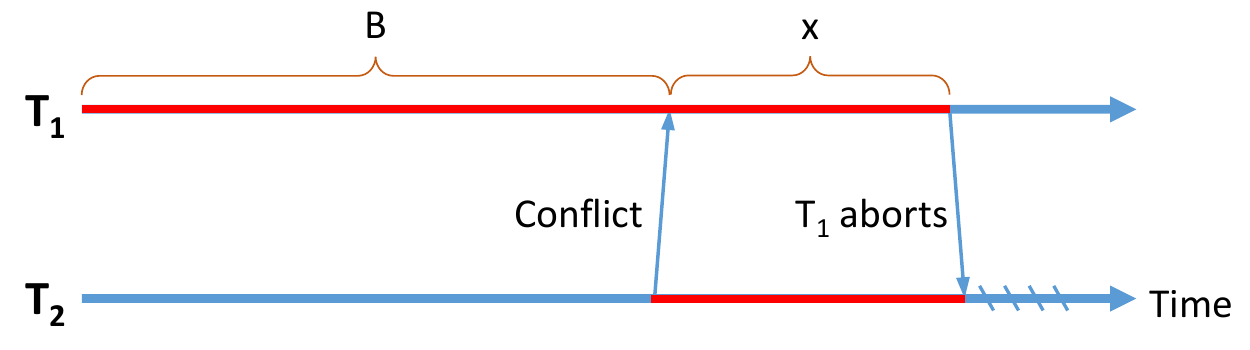}
		\caption{Requestor wins scenario. Here, $T_1$ has been running for $B$ time steps when a conflict with $T_2$ is detected. 
	$T_1$ has the possibility of delaying $T_2$'s message response for $x$ steps before aborting itself. The ``wasted'' time added to the execution of both transactions in case $T_1$ aborts is shown in red. }
	\label{fig:req_wins}
\end{subfigure} ~
	\begin{subfigure}[b]{0.45\textwidth}
	\epsfig{width=\textwidth, file=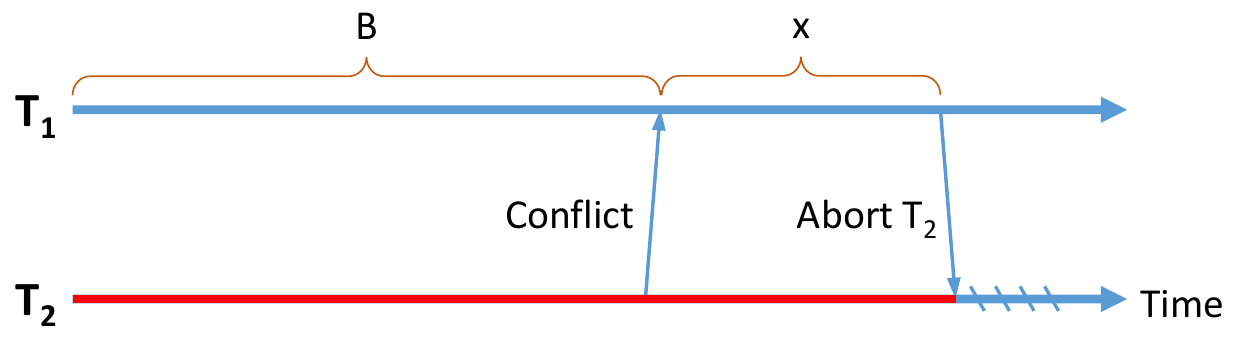}
	\caption{Requestor aborts scenario. Again, $T_1$ has been running for $B$ time steps when a conflict with $T_2$ is detected. 
	$T_1$ has the possibility of delaying $T_2$'s message response for $x$ steps before aborting $T_2$. The ``wasted'' time added to the execution of both transactions in case  $T_2$ aborts is shown in red. }
	\label{fig:req_aborts}
\end{subfigure}
	\caption{The transactional conflict problem for the two conflict resolution strategies.}
	\label{fig:conflict}
\end{figure}

\subsection{Conflict and Cost Models} 

\paragraph{Transaction Conflict Scheduling} As described, we have $n$ threads, each of which has a virtually infinite sequence of transactions to execute. 
Threads proceed to execute these transactions in parallel. 
At arbitrary times during the execution, an adversary can interrupt a pair of transactions,  and put them in conflict: one of them will be the \emph{requestor}, and the other one is the \emph{receiver}. 
The algorithm has the choice of resolving the conflict immediately, by aborting the one of the transactions, or to postpone the abort for a grace period $\Delta$. 
The cost model is specified for each conflict resolution strategy in Section~\ref{sec:def}.

Upon aborting a transaction, the thread will restart its execution immediately. 
Initially, the conflicted transaction will no longer be in conflict with any other transaction, although it may become conflicted if chosen by the adversary. 
Upon committing a transaction, the thread moves to the next transaction in its input. 
During its grace period, a receiver transaction may become conflicted as a requestor with another transaction, since it may need to access some new data item. 

\paragraph{Additional Assumptions} In the following, we will assume a simplified version of the conflict model, and analyze the global competitiveness of our strategies. In particular, we assume that (a) a transaction that is part of a conflict as a requestor cannot become part of a new conflict as a receiver,  (b) a transaction that is currently during its grace period cannot be conflicted again as a receiver by the adversary (but the transaction may be conflicted as a requestor), and (c) that conflicts cannot be cyclic. 
Assumptions (a) and (b) are in some sense necessary given our strong adversarial model: we wish to ensure that the adversary can only inflict the same set of conflicts on the offline optimal strategy as to the online decision algorithm. 
We note that assumption (c) is implemented by some real-world HTM implementations, which actively detect conflict cycles, and abort all transactions involved upon such events~\cite{WCH}.

\paragraph{Cost Model}
Given the above setup, our cost model is defined as follows. Fix an adversarial conflict strategy $S$, and a (possibly randomized) algorithm $\mathcal{A}$ for resolving conflicts. For each transaction $T$, define $\Gamma(T, \mathcal{A})$ be the expected length of the interval between $T$'s starting time (i.e., the first time it is invoked) and its eventual commit time, assuming conflicts induced by $S$.  
We analyze strategies for deciding the grace period, which minimize the \emph{expected sum of running times of transactions}. Formally, we wish to find an algorithm $\mathcal{A}$ such that, for any adversarial strategy $S$, 
$ \sum_{T} \Gamma(T, \mathcal{A}) \textnormal{ is minimized.} $

\subsection{Background on the Ski Rental Problem}
\label{sec:background}
 We recall the definition of the ski rental problem~\cite{karlin}: a person goes skiing for an unknown number of days $D$. On each day, he or she can decide whether or not to buy skis for a fixed price $B$. 
 If the person decides to not buy the skis, he or she has to rent them for a fixed cost per day $c < B$, which we will choose w.l.o.g. to be equal to 1. 
 The challenge is to find an algorithm that minimizes the expected cost of the ski tour over the number of days $D$. 
 The optimal cost with foresight for this problem is clearly $\min(D, B)$, and the optimal online deterministic strategy has cost $2B - 1$ if $D \geq B$. So the deterministic competitive ratio is basically $2$. 
 It is known that one can do better by employing a randomized strategy \cite{karlin}: 

\begin{theorem}
\label{thm:skirental}
	Consider a strategy where, for each day $i \geq 1$,  with probability $p_i$, we rent skis on day $i$.
	Then, if we take 
	$ p (i) = \left( \frac{B - 1}{B} \right)^{B - i} \frac{1}{B[ 1 - ( 1 - 1 / B)^B] },$
	for days $i \leq B$, and $0$ otherwise, we pay expected total cost 
	$ \left(\frac{e}{e-1}\right) \min( D, B ).$
\end{theorem}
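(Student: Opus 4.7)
The plan is to split the argument into three parts: verify that $\{p_i\}_{i=1}^{B}$ is a valid probability distribution, compute the expected cost in the two cases $D \geq B$ and $D < B$, and finally invoke the classical inequality $(1-1/B)^B \leq 1/e$ to bound the resulting expression by $\tfrac{e}{e-1}\min(D,B)$. Throughout, let $q = (B-1)/B$ so that $p_i = Cq^{B-i}$ with $C = 1/(B(1-q^B))$, and note the useful algebraic identities $1-q = 1/B$ and $Bq = B-1$.

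The normalization $\sum_{i=1}^{B}p_i = 1$ follows immediately from the geometric identity $\sum_{k=0}^{B-1}q^k = (1-q^B)/(1-q) = B(1-q^B)$. For the case $D \geq B$, the algorithm always buys on some day $i \leq B \leq D$, paying a deterministic cost $(i-1)+B$, so $E[\text{cost}] = \sum_{i=1}^{B}p_i(i-1+B)$. After substituting $j = B-i$ this becomes $C\sum_{j=0}^{B-1}q^j(2B-1-j)$, which I would evaluate by splitting into $(2B-1)\sum_j q^j$ and $\sum_j jq^j$. Using the standard identities $\sum_{j=0}^{B-1}q^j = B(1-q^B)$ and $\sum_{j=0}^{B-1}jq^j = B^2 q(1 - q^{B-1} - q^B)$ (the latter obtained by differentiating the geometric sum and applying $1-q=1/B$), together with the cancellation $-B+1+Bq = 0$, the $q^B$ terms collapse and the expression reduces cleanly to $E[\text{cost}] = B/(1-q^B)$.

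For $D < B$, rather than redo the full computation I would argue that $f(D) := E[\text{cost} \mid D]$ is affine on $\{0,1,\ldots,B\}$. Computing the discrete first difference yields $f(D+1)-f(D) = p_{D+1}(B-1) + 1 - \sum_{i=1}^{D}p_i$; differencing once more and using the recurrence $p_{i+1} = p_i/q$ implicit in the definition of $p$ shows that $f(D+1)-f(D)$ is independent of $D$. Combined with $f(0)=0$ and the endpoint value $f(B) = B/(1-q^B)$ already computed, linear interpolation gives $f(D) = D/(1-q^B)$ for every $0 \leq D \leq B$. Together with the $D\geq B$ case, this establishes $f(D) = \min(D,B)/(1-(1-1/B)^B)$; the classical bound $(1-1/B)^B \leq 1/e$ then yields $1/(1-q^B) \leq e/(e-1)$ and hence $f(D) \leq \tfrac{e}{e-1}\min(D,B)$, as claimed.

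The main obstacle is the algebraic bookkeeping in the $D \geq B$ case: the three substitutions $q = 1-1/B$, $1-q = 1/B$, and $Bq = B-1$ have to be applied in just the right order so that the $q^B$ contributions from $(2B-1)\sum q^j$ and $\sum jq^j$ exactly cancel, leaving the surprisingly clean closed form $B/(1-q^B)$. The linearity argument for $D < B$ is conceptually simple but similarly requires careful handling of the recurrence and the partial sums of $p$.
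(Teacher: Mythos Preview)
The paper does not actually prove this theorem; it is stated in Section~\ref{sec:background} as known background and attributed to Karlin et al.~\cite{karlin}, so there is no in-paper proof to compare against. Your argument is correct and is essentially the standard verification. The normalisation is immediate. For $D\geq B$, your identity $\sum_{j=0}^{B-1}jq^{j}=B^{2}q\,(1-q^{B-1}-q^{B})$ follows from $1-Bq^{B-1}+(B-1)q^{B}=1-q^{B-1}(B-(B-1)q)=1-q^{B-1}(1+q)$ once one uses $B(1-q)=1$; the subsequent cancellation of all $q^{B}$-terms (via $-B+1+Bq=0$) then leaves exactly $B/(1-q^{B})$. For $D<B$, your second-difference computation is correct: $(B-1)(p_{D+2}-p_{D+1})-p_{D+1}=p_{D+1}\bigl[(B-1)(1-q)/q-1\bigr]=0$ because $Bq=B-1$, so $f$ is affine with $f(0)=0$ and $f(B)=B/(1-q^{B})$, giving $f(D)=D/(1-q^{B})$. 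The final step $(1-1/B)^{B}\leq 1/e$ yields the bound $e/(e-1)$, which is the intended reading of the theorem (the expected cost is \emph{at most} $\tfrac{e}{e-1}\min(D,B)$, with equality only in the limit $B\to\infty$).
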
 

Using additional knowledge about the adversarial distribution, e.g. the mean $\mu$, it is known that one can improve the strategy further \cite{constrained_skirental}:

\begin{theorem}
\label{thm:skirental-mean}
	Knowledge of the mean $\mu$ of the adversarial function $\pi(y)$ yields a new randomized strategy $p(x)$ for the standard ski rental problem, namely if $\frac{\mu}{B} < 2 \frac{e-2}{e-1}$, then 
	$$p(x)  = \begin{cases}
	\frac{1}{B(e-2)} e^{\frac{x}{B}}-1, &0 \leq x \leq B \\
	0, &\text{else.}
	\end{cases}$$ 
 The competitive ratio improves to
	$ 1 + \frac{\mu}{2B ( e - 2 )}$. Otherwise, the previous strategy is optimal. 
\end{theorem}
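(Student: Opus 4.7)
The plan is to follow the Lagrangian framework of Khanafer et al.\ adapted to a density $p(x)$ on $[0,B]$ encoding the (randomized) day on which the algorithm buys. Writing the expected algorithmic cost conditioned on the adversarial number of days $y$ as
$$C(y) = \int_0^y p(x)(x+B)\,dx + y\int_y^B p(x)\,dx,$$
the competitive ratio of $p$ against an adversary distribution $\pi$ with mean $\mu$ is $\int_0^{\infty} C(y)\,\pi(y)\,dy \big/ \int_0^{\infty} \min(y,B)\,\pi(y)\,dy$. I would first argue that the worst-case $y$ is in $[0,B]$ (beyond $B$ the offline optimum saturates) and reduce to finding $p$ that minimizes the maximum of $C(y)/\min(y,B)$ over $y \in (0,B]$ subject to the mean constraint encoded through duality.

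Next I would form the Lagrangian with a pointwise multiplier $\lambda(y)$ for each constraint $C(y) \le c\cdot\min(y,B)$, a scalar $\alpha$ for the normalization $\int_0^B p(x)\,dx = 1$, and a scalar $\beta$ for the mean constraint $\int_0^B y\,dM(y) = \mu$ (where $M$ is the adversary's effective distribution). Taking the functional derivative with respect to $p(x)$ and invoking KKT stationarity yields a first-order condition of the form
$$\int_x^B \lambda(y)\,dy = \alpha + \beta\,x,$$
after plugging in the piecewise structure of $C(y)$. Differentiating once in $x$ and using complementary slackness for the competitive-ratio constraints (which bind on the support of the adversary's best response) produces an ODE whose solution is of the form $\lambda(y) \propto e^{y/B}$; inverting this to recover $p$ then gives the stated shape
$$p(x) = \frac{1}{B(e-2)}\bigl(e^{x/B} - 1\bigr),\qquad 0 \le x \le B.$$
The constants are pinned down by $\int_0^B p = 1$ together with the endpoint condition at $x = B$; the subtracted constant (versus the classical exponential density) is exactly the term that lets the mean constraint be active.

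Substituting this $p$ back into $C(y)/\min(y,B)$, I would verify that the ratio is constant on the adversary's support, and then take the expectation against a worst-case two-point $\pi$ with mean $\mu$ (a standard reduction: the adversary's optimum concentrates on at most two atoms because there are two active primal constraints). This yields the claimed competitive ratio $1 + \mu/\bigl(2B(e-2)\bigr)$. The threshold $\mu/B < 2(e-2)/(e-1)$ is then obtained as the regime in which this ratio is strictly smaller than $e/(e-1)$; above it, the mean constraint is slack and the unconstrained strategy of Theorem~\ref{thm:skirental} is already optimal.

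The main obstacle I expect is the Lagrangian manipulation: one has to be careful that $p$ is nonnegative throughout $[0,B]$ (which forces the specific additive $-1$ term rather than a pure exponential), and that complementary slackness is correctly handled both at the interior stationarity condition and at the boundary $x = B$ where $p$ has a jump to zero. A secondary subtlety is verifying that one need not enlarge the algorithm's support beyond $[0,B]$, which follows because paying $B$ guarantees the offline cost and any strategy purchasing after $B$ is strictly dominated.
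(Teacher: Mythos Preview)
The paper does not prove this statement; it is quoted as background from Khanafer et al.\ \cite{constrained_skirental}, though the paper does apply the identical Lagrangian machinery in its own Theorems~3 and~5. Your high-level plan---dualize, extract an ODE, solve it, compare corner points to get the threshold---matches that machinery, and the final $p$ and ratio you write down are correct.

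Where your write-up goes wrong is in the Lagrangian itself. You introduce both a continuum of multipliers $\lambda(y)$ for pointwise constraints $C(y)\le c\min(y,B)$ \emph{and} a scalar $\beta$ for the mean constraint $\int y\,dM(y)=\mu$. But the mean is a constraint on the \emph{adversary's} distribution, not on $p$, so its multiplier cannot appear in the stationarity condition $\partial L/\partial p(x)=0$; your claimed first-order condition $\int_x^B\lambda(y)\,dy=\alpha+\beta x$ therefore does not follow from the Lagrangian you wrote, and the step ``ODE for $\lambda$, then invert to recover $p$'' is never made precise. The route that actually works (and that the paper uses for its analogous theorems) dualizes only the adversary's maximization: with two scalar multipliers $\lambda_1,\lambda_2$ for $\int\pi=1$ and $\int y\,\pi=\mu$, finiteness of the dual forces the coefficient of $\pi(y)$ to vanish, i.e.\ $C(y)/y=\lambda_1+\lambda_2 y$ identically for $y\in(0,B]$ (plus one equation from the atom at $y>B$). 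Differentiating this identity twice in $y$ gives a first-order ODE \emph{directly} for $p$, namely $Bp'(x)-p(x)=2\lambda_2$, whose solution is $p(x)=\alpha e^{x/B}-2\lambda_2$. Normalization and $p\ge 0$ then confine $(\lambda_1,\lambda_2)$ to a segment whose two corners yield ratios $e/(e-1)$ and $1+\mu/(2B(e-2))$, and comparing them gives the threshold $\mu/B<2(e-2)/(e-1)$. This avoids the mixed-up multiplier roles in your setup and produces the ODE for $p$ in one step rather than via an unspecified inversion.
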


\section{The Transactional Conflict \\ Problem}
\label{sec:def}

We now formalize our problem, assuming the transactional model in Section~\ref{sec:model}. 
As the cost metric is different depending on the contention resolution strategy, we define the problem differently for each case. 
For simplicity, we will start by defining the problem for the case with two conflicting transactions, and then extend to the case with longer conflict chains. 
%
%
%

	\subsection{Problem Statement for Requestor Wins} \label{sec:req_wins} 


\paragraph{Two Conflicting Transactions} 
Assume that an executing transaction $T_1$ (the receiver) is interrupted by a transaction $T_2$ (the requestor), and that we have the choice between aborting $T_1$ immediately, and postponing  $T_1$'s abort by $x$ steps, in the hope that it commits. 
Assume that $T_1$ still has to execute for $D$ time steps (unknown) to commit, and that aborting $T_1$ immediately incurs a fixed cost $B$.\footnote{In practice, the cost $B$ will consist of the time for which the transaction $T_1$ has already been running when interrupted, plus a fixed non-trivial cleanup cost.}
%
%
We define the \emph{conflict cost} as follows: 
 (1) \textbf{If $D \leq x$}, then transaction $T_1$ commits at or before $x$, and we only pay the time by which we delayed $T_2$, which is $D$. In terms of the sum of running times, our delay decision added $D$ to the total cost. 
 (2) \textbf{If $D > x$}, then transaction  $T_1$ has not committed at $x$, and we will abort $T_1$, allowing $T_2$ to continue. 
We will pay the abort cost $B$, the additional time $x$ by which we ran $T_1$, and the additional time $x$ by which $T_2$ was delayed. This sums up to $2x + B$. 

\paragraph{The General Case} 
In general, the conflict chain may be longer, as several transactions may be delayed if we decide to extend $T_1$'s execution. For example, a third transaction $T_3$ may already be waiting on $T_2$ to commit at the point where $T_2$ conflicts with $T_1$. 
Assume that $k \geq 2$ transactions, including $T_1$, are conflicting, forming a conflict chain. This means that delaying $T_1$'s abort by \emph{one} step will add $k - 1$ time steps to the total running time of all transactions. 
In this case, the cost metric becomes:

\begin{itemize}
	\item If $D \leq x$, then $T_1$ commits, and we pay the time by which we delayed all transactions other than $T_1$, which is $(k - 1)D$. 
	\item If $D > x$, then we abort $T_1$, and pay the abort cost $B$ and the additional time $x$ by which we ran $T_1$, as well as the additional time $(k - 1)x$ by which other transactions are delayed. This sums up to $kx + B$. 
\end{itemize}

We will first  focus on decision algorithms which optimize the expected cost \emph{of each conflict}. 
Later, in Section~\ref{sec:competitive}, we will show how to use competitive bounds on individual conflict cost to obtain competitive bounds on the throughput. 
We are trying to identify the probability distribution ${p}$ which minimizes the expected decision cost under arbitrary adversarial choices $D$. 
Formally, for arbitrary $D$, and fixed $k$ and $B$ as above, 
\begin{align*}
	p & = \text{argmin}_{p} \text{Cost}(p(x),D), \text{ where} \\ 
	\text{Cost}(p(x),D) &= \int_{0}^{D}(kx + B) p(x) dx  \\ & + (k-1) \int_{D}^{\frac{B}{k-1}} D p(x) dx,
\end{align*}

\noindent where we have noticed that for $D > B / (k - 1)$ the algorithm will always abort. 
Notice that the offline optimum is $\min (B, (k - 1) D ).$ 

%


\subsection{Problem Statement for Requestor Aborts} \label{sec:req_aborts} 

\paragraph{Two Conflicting Transactions}  
The converse case is when the receiving transaction $T_1$ can abort or delay any requestor transaction $T_2$. 
Again, we might choose to delay the conflict resolution by a grace period of $x$ steps. 
Assume that $T_2$ still has to execute for $D$ steps to commit, where $D$ is unknown. 
The conflict cost is now as follows: 

\begin{itemize}
	\item If $D \leq x$, then transaction $T_1$ commits at or before $x$, and we only pay the time by which we delayed the incoming transaction $T_2$, i.e., $D$. 
	\item If $D > x$, then transaction  $T_1$ has not committed at $x$, and we will abort $T_2$ and continue. In this case, we pay as extra cost $B$, a fixed abort cost, plus the time by which $T_2$ was delayed, i.e. total cost $x + B$. 
\end{itemize}

Notice that the optimal cost with foresight is $\mbox{OPT} = \min ( B, D )$. By comparing variables we realize that there exists a direct mapping between the ski rental problem and the requestor aborts version of the transactional conflict problem:

The point where the requesting transaction $T_2$ interrupts transaction $T_1$ marks day 1 of the ski rental problem. Furthermore, we have that $D$, the unknown number of steps until $T_1$ would eventually commit, denotes the day on which the adversary chooses to stop us from skiing. The choice of $x$ i.e. the length of the grace period by which $T_1$ delays $T_2$, is the same as to choose that we buy skis on day $x+1$. Finally, the fixed extra cost $B$ in the transactional conflict problem is the equivalent to the cost $B$ of buying the skis. Note that in this case we assume that for the transactional conflict problem if $x=D$, $T_1$ is not able to commit and thus it aborts. If $x > D$, $T_1$ will be able to commit on time step $D$. 
Figure~\ref{fig:req_aborts} illustrates this case.
The results in Theorems~\ref{thm:skirental} and~\ref{thm:skirental-mean} apply to this case.

Generalizing to conflicts of size $k > 2$ is also possible:

\begin{theorem}	
	 if $\frac{\mu + 2(( k - 1 ) (e^{\frac{1}{k-1}} - 1 ) - 1)}{B} < 2(( k - 1 ) (e^{\frac{1}{k-1}} - 1 ) - 1)$ the optimal PDF is:
	\begin{align*}
	p(x) = \begin{cases}
	\frac{( k - 1 )}{B (( e^{\frac{1}{k-1}} - 1 ) ( k - 1 ) - 1)} ( e^{\frac{x}{B}} - 1 ), &0 \leq x \leq \frac{B}{k-1} \\
	0, &\text{otherwise}
	\end{cases},
	\end{align*}
and otherwise the optimal PDF is:
	\begin{align*}
	p(x) = \begin{cases}
	\frac{( k - 1 )}{B ( e^{\frac{1}{k-1}} -1) ( k - 1 )} e^{\frac{x}{B}}, &0 \leq x \leq \frac{B}{k-1} \\
	0, &\text{otherwise}
	\end{cases}
	\end{align*}
\end{theorem}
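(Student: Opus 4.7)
The plan is to adapt the Lagrange-multiplier analysis of Khanafer et al.~\cite{constrained_skirental} --- the same technique underlying the proof of Theorem~\ref{thm:skirental-mean} --- to the $k$-transaction requestor-aborts cost functional. By direct analogy with the two-transaction case of Section~\ref{sec:req_aborts}, the conflict cost for online delay $x$ against remaining runtime $D$ is $(x+B)$ when $D > x$ and proportional to $D$ when $D \leq x$, while the offline optimum with foresight is $\min(B,(k-1)D)$. Because any delay past $x = B/(k-1)$ strictly dominates immediate abortion, no optimal PDF places mass outside $[0,B/(k-1)]$, and we may restrict $p$ to that interval.

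First I would phrase the optimization as a minimax game against an adversary on $D$ with fixed mean $\mu$ and, by the usual von Neumann argument, reduce the optimal online strategy to the equalizer condition
\begin{align*}
\int_0^D (x+B)\,p(x)\,dx + D \int_D^{B/(k-1)} p(x)\,dx \;=\; r \cdot (k-1)D,
\end{align*}
which must hold on the support of the worst-case adversary. Differentiating once in $D$ collapses this integral equation to
\begin{align*}
B\,p(D) + \int_D^{B/(k-1)} p(x)\,dx \;=\; r(k-1),
\end{align*}
so in the unconstrained case a second differentiation immediately yields the linear ODE $B\,p'(D) - p(D) = 0$, giving $p(x) = A e^{x/B}$.

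Next I would incorporate the mean constraint $\int x\,p(x)\,dx = \mu$ via a Lagrangian with multiplier $\lambda$, which effectively replaces the equalizer target $r(k-1)D$ by $(r + \lambda D)(k-1)D$. A second differentiation of the resulting identity then produces the shifted ODE $B\,p'(D) - p(D) + c(\lambda) = 0$, whose solutions take the form $p(x) = A\,e^{x/B} + C$, with $C = 0$ exactly when complementary slackness makes the mean constraint inactive. Solving for $A$, $C$, and $r$ from the right-endpoint boundary condition $B\,p(B/(k-1)) = r(k-1)$ obtained by evaluating the equalizer at $D = B/(k-1)$, normalization over $[0,B/(k-1)]$, and the mean equality when it binds, yields the two branches announced in the theorem; the threshold on $\mu$ separating the two regimes is precisely the value at which $C$ transitions through zero, which after rearranging the $(k-1)$-factors matches the hypothesis stated in the theorem.

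The main obstacle will be the careful complementary-slackness bookkeeping that identifies the threshold and certifies the regime switch, together with justifying the reduction from the minimax objective to the pointwise equalizer identity: as in Khanafer et al., one must argue that the worst-case adversarial distribution is supported on at most two atoms within $[0,B/(k-1)]$ plus a single atom past the right endpoint, so that the optimum indeed satisfies the equalizer equation above. A secondary, more routine source of difficulty is verifying that both branches of $p(x)$ remain nonnegative on $[0,B/(k-1)]$ and coincide continuously at the threshold value of $\mu$, which amounts to careful but mechanical algebra involving the $(k-1)$ factors.
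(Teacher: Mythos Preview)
Your proposal is correct and follows essentially the same route as the paper: both reduce the minimax problem to a pointwise equalizer constraint on $\text{Cost}(p,y)/\text{OPT}(y)$, differentiate twice to obtain the linear ODE $Bp'(x)-p(x)=2\lambda_2$, solve to get $p(x)=\alpha e^{x/B}-2\lambda_2$, and then use normalization together with the boundary behaviour at $x=B/(k-1)$ to pin down $\alpha$ and the range of $\lambda_2$, with the threshold on $\mu$ emerging from the two corner points of the resulting linear program (equivalently, your complementary-slackness switch at $C=0$). The only cosmetic discrepancy is that the paper carries the $(k-1)$ factor explicitly in the cost $(k-1)(x+B)$ and in the denominator $(k-1)y$, whereas you drop it from the cost while keeping it in $\text{OPT}=\min(B,(k-1)D)$; this rescales the competitive ratio but leaves the optimizing density and the threshold unchanged.
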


\begin{proof}
In the requestor aborts case for $k>2$ transactions,  we have one receiver transaction $T_1$ and $k-1$ requestor transactions.  Therefore in abort case, since we abort $k-1$ transactions, the extra cost becomes  $(k-1)(\Delta+B)$. This gives us the following cost function :
	
	\begin{align}
		\text{Cost(p(x),y)} = \int_{0}^{y} ( k - 1 )( x + B ) p(x) dx + ( k - 1 ) \int_{y}^{\frac{B}{k-1}} y p(x) dx		
	\end{align}
	
	Thus our Lagrangian function looks as follows:
	
	\begin{align*}
	L(p(x),\lambda_1,& \lambda_2) = \int_{0}^{\frac{B}{k-1}} \left( \frac{\text{Cost(p(x),y)}}{(k-1)y}  -\lambda_1 - \lambda_2 y \right)\pi(y) dy \\
	&+ \pi_K \cdot \left( \int_{0}^{\frac{B}{k-1}} \frac{( k - 1 ) ( x + B )}{B} p(x) dx - \lambda_1 - \lambda_2K\right) \\
	& + \lambda_1 + \lambda_2 \mu
	\end{align*}
	
	We note that, since the adversarial strategy is arbitrary, we will need the following two constraints:
	
	\begin{align}
	\frac{\text{Cost(p(x),y)}}{(k-1)y} &= \lambda_1 + \lambda_2 y \label{constr_ap_1} \\
	\int_{0}^{\frac{B}{k-1}} \frac{( k - 1 ) ( x + B )}{B} p(x) dx &= \lambda_1 + \lambda_2K \label{constr_ap_2}
	\end{align}
	
	By differentiating the constraint \ref{constr_ap_1} twice w.r.t. $y$ and substituting $y$ with $x$ we get
	
	\begin{align*}
	p'(x) B -  p(x)&= 2 \lambda_2 \\
	\end{align*}
	
	Solving this first order differential equation gives us:
	
	\begin{align}
		p(x) = \alpha e^{\frac{x}{B}} - 2 \lambda_2 \label{strategy_ap}
	\end{align}
	
	We can solve for $\alpha$ by using the fact that $p(x)$ is a PDF
	
	\begin{align}
		\alpha = \frac{( k - 1 ) + 2 B \lambda_2}{B ( e^{\frac{1}{k-1}} -1) ( k - 1 )} \label{alpha_ap}
	\end{align}
	
	By substituting \ref{strategy_ap} into our constraint \ref{constr_ap_1} we get
	
	\begin{align*}
		( \alpha B e^{\frac{y}{B}} &- 2 \lambda_2 y ) B - \alpha B^2 - \alpha B^2 e^{\frac{y}{B}} +  \lambda_2 y^2 + \\ \alpha B^2 &+ y \alpha B e^{\frac{1}{k-1}} - 2 y \lambda_2 \frac{B}{k-1} = y ( \lambda_1 + \lambda_2 y) \\
		\alpha ( &-B^2 + B^2 + B^2e^{\frac{y}{B}} - B^2e^{\frac{y}{B}} + y B e^{\frac{1}{k-1}} ) - \\  &\lambda_2 ( 2 y B - y^2 + 2 y \frac{B}{k-1} ) = y ( \lambda_1 + \lambda_2 y) \\
		\alpha B e^{\frac{1}{k-1}} &- \lambda_2 ( 2 B + 2 \frac{B}{k-1}) = \lambda_1
	\end{align*}
	
	Using \ref{alpha_ap} we get:
	
	\begin{align}
		\frac{e^{\frac{1}{k-1}}}{e^{\frac{1}{k-1}} - 1} - \lambda_2 \cdot 2 \left ( B - \frac{B}{ ( e^{\frac{1}{k-1}} - 1 ) ( k - 1 )} \right)&= \lambda_1 \label{constr1_solved_ap}
	\end{align}
	
	Similarly substituting \ref{strategy_ap} into constraint \ref{constr_ap_2} we get:
	
	\begin{align}
		\alpha B e^{\frac{1}{k-1}} &- \lambda_2 \left( 2 B + \frac{B}{k-1} \right) = \lambda_1 + \lambda_2 K \nonumber \\
		\frac{e^{\frac{1}{k-1}}}{e^{\frac{1}{k-1}} - 1} &- \nonumber \\ \lambda_2 & \cdot 2 \left ( \frac{B}{2 ( k - 1 )} + B + \frac{K}{2}  - \frac{B e^{\frac{1}{k-1}}}{ ( e^{\frac{1}{k-1}} - 1 ) ( k - 1 )}\right) = \lambda_1 \label{constr2_solved_ap}
	\end{align}
	
	In order for \ref{constr1_solved_ap} and \ref{constr2_solved_ap} to be true at the same time we need
	
	\begin{align}
		K &= \frac{B}{k-1} \label{constr_l_ap}
	\end{align}
	
	Since we know that $p(x)$ is a PDF we know that $\forall x \in [0, \frac{B}{k-1}], p(x) \geq 0$ and thus we get
	
	\begin{align*}
		\left(1 - \frac{e^{\frac{x}{B}}}{( e^{\frac{1}{k-1}} - 1 ) ( k - 1 )} \right)\lambda_2 &\leq \frac{e^{\frac{x}{B}}}{2 B ( e^{\frac{1}{k-1}} - 1 )}
	\end{align*}
	
	Thus we get the following constraints on $\lambda_2$:
	
	\begin{align}
		\lambda_2 &\leq \frac{ ( k - 1 ) e^{\frac{x}{B}}}{2 B (( k - 1 ) (e^{\frac{1}{k-1}} - 1 ) - e^{\frac{x}{B}})}, \nonumber\\ &\text{  if } 0 \leq x \leq B \log{\left(( k - 1 ) (e^{\frac{1}{k-1}} - 1 )\right)} \label{lambda2_1_ap}\\
		\lambda_2 &> \frac{ ( k - 1 ) e^{\frac{x}{B}}}{2 B (( k - 1 ) (e^{\frac{1}{k-1}} - 1 ) - e^{\frac{x}{B}})}, \nonumber \\ &\text{  if } B \log{\left(( k - 1 ) (e^{\frac{1}{k-1}} - 1 )\right)} < x \leq \frac{B}{k-1} \label{lambda2_2_Ap}
	\end{align}
	
	We immediately realize that the r.h.s. of \ref{lambda2_1_ap} is positive and strictly increasing, whereas the r.h.s. of \ref{lambda2_2_Ap} is negative. Thus we now know that we must have
	
	\begin{align*}
		0 \leq \lambda_2 \leq \frac{k-1}{2 B (( k - 1 ) (e^{\frac{1}{k-1}} - 1 ) - 1)}
	\end{align*}
	
	Thus our problem becomes:
	
	\begin{align}
		\min_{\lambda_1, \lambda_2} \lambda_1 + \lambda_2\mu \text{ s.t. } \nonumber \\
		\frac{e^{\frac{1}{k-1}}}{e^{\frac{1}{k-1}} - 1} - \lambda_2 \cdot 2 \left ( B - \frac{B e^{\frac{1}{k-1}}}{ ( e^{\frac{1}{k-1}} - 1 ) ( k - 1 )} + B \right)&= \lambda_1 \\
		\frac{e^{\frac{1}{k-1}}}{e^{\frac{1}{k-1}} - 1} - \lambda_2 \cdot 2 \left ( \frac{B}{2 ( k - 1 )} + B + \frac{K}{2}  - \frac{B e^{\frac{1}{k-1}}}{ ( e^{\frac{1}{k-1}} - 1 ) ( k - 1 )}\right) &= \lambda_1 \\
		\forall \lambda_1  \geq 0, 0 \leq \lambda_2 \leq \frac{k-1}{2 B (( k - 1 ) (e^{\frac{1}{k-1}} - 1 ) - 1)} \nonumber
	\end{align}
	
	We already realized that \ref{constr_l_ap}. We further remind ourselves that the solutions to this linear program have to form a convex polytope and that each basic feasible solution is a corner point of that polytope. Therefore we get the following two corner points
	
	\begin{align*}
		c_1 = \left(\frac{e^{\frac{1}{k-1}}}{e^{\frac{1}{k-1}} - 1},0\right) \text{ and } c_2 = \left(1,\frac{k - 1 }{2 B (( k - 1 ) (e^{\frac{1}{k-1}} - 1 ) - 1)}\right)
	\end{align*}
	
	This gives us the following $C$'s
	
	\begin{align*}
		C_1 = \frac{e^{\frac{1}{k-1}}}{e^{\frac{1}{k-1}} - 1} \text{ and } C_2 = 1 + \frac{\mu ( k - 1 )}{2 B (( k - 1 ) (e^{\frac{1}{k-1}} - 1 ) - 1)}
	\end{align*}
	
	Notice that for $\lambda_2 = 0$ we do not take the mean into account and thus are left with the solution for the non-constrained ski rental problem. For $\lambda_2 \neq 0$ we realize that $C_2 < C_1$ is only valid for small values of $\mu$. More precisely, $C_2 < C_1, \text{ for }  \frac{\mu + 2(( k - 1 ) (e^{\frac{1}{k-1}} - 1 ) - 1)}{B} < 2(( k - 1 ) (e^{\frac{1}{k-1}} - 1 ) - 1)$. For $B > 1$ it simplifies to $\frac{\mu}{B-1} < 2(( k - 1 ) (e^{\frac{1}{k-1}} - 1 ) - 1)$. 
\end{proof}

\section{Analysis for Requestor Wins} \label{sec:non_const}

In the following, we will focus on analyzing optimal strategies for the \emph{requestor wins} strategy. In particular, we will examine the optimal deterministic strategy for the \emph{unconstrained case}, in which no additional information about the adversarial distribution is known, and optimal \emph{randomized strategies} for the \emph{constrained case}, in which the first moment of the adversarial length distribution is known. (This case also convers the randomized unconstrained case, so we solve them together.) 

\subsection{The Deterministic Unconstrained Case}
Since the algorithm is deterministic, it has to choose a time step $x$ at which to abort. 
Denote by $D$ the time step at which the transaction would commit, if allowed to execute. 
Throughout the rest of this section, unless stated otherwise, the expression \textit{we abort} means that the receiving transaction aborts. 

Observe that the optimal cost is $\min((D(k-1),B)$.
For simplicity, let us  assume that $\frac{B}{k-1}$ is an integer. There are two cases we need to examine. The first one is $x \leq D$, in which we pay cost $kx +  B$. The second one is $x > D$,  in which we pay cost $(k-1)D$. 
Note that the adversary knows after which time step $x$ we decided to abort, and thus will never choose to set the end of transaction after $x$. 
The following result characterizes the optimal deterministic strategy. 

\begin{theorem} \label{Det-Non-Constr}
	The optimal deterministic strategy always chooses to abort after $\frac{B}{k-1}$ time steps. This strategy has total cost
	$ \left(2+\frac{1}{k-1} \right)\min((D(k-1),B).$
\end{theorem}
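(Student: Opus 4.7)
The plan is to show that any deterministic threshold $x$ is dominated (in the worst case over $D$) by the choice $x=B/(k-1)$, and then to evaluate the resulting cost on the two relevant regimes of $D$. Since the algorithm is deterministic, it commits to a single abort time $x\ge 0$, and the adversary, knowing $x$, picks $D$ to maximize the ratio of our cost to the offline optimum $\mathrm{OPT}(D)=\min((k-1)D,B)$.

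First I would write out the cost explicitly as a two-piece function of $D$ given the choice $x$: cost equals $(k-1)D$ when $D\le x$ and equals $kx+B$ when $D>x$. Note that the case $D\le x$ always gives ratio $1$ (since then $(k-1)D\le\mathrm{OPT}$ with equality), so the adversary's best strategy is always to pick some $D>x$, in which case our cost is the constant $kx+B$ while $\mathrm{OPT}$ depends on whether $(k-1)D$ is above or below $B$. The adversary will therefore choose $D$ to minimize $\mathrm{OPT}(D)$ subject to $D>x$. This splits into two subcases according to whether $x<B/(k-1)$ or $x\ge B/(k-1)$.

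Next I would compute the worst-case ratio in each subcase. If $x<B/(k-1)$, the adversary takes $D\to x^+$, making $\mathrm{OPT}\to(k-1)x$, so the ratio tends to $(kx+B)/((k-1)x)=k/(k-1)+B/((k-1)x)$, which is strictly decreasing in $x$ and hence bounded below by its value at $x=B/(k-1)$, namely $(2k-1)/(k-1)=2+\tfrac{1}{k-1}$. If instead $x\ge B/(k-1)$, then $\mathrm{OPT}=B$ for any $D>x$, so the ratio is $(kx+B)/B=kx/B+1$, which is strictly increasing in $x$ and again minimized at $x=B/(k-1)$ with value $2+\tfrac{1}{k-1}$. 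Both pieces meet at $x=B/(k-1)$, establishing that this is the unique worst-case-optimal threshold.

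Finally I would verify the claimed closed form for the total cost at this optimum: when $D\le B/(k-1)$ the cost is $(k-1)D=\mathrm{OPT}\le(2+\tfrac{1}{k-1})\mathrm{OPT}$, and when $D>B/(k-1)$ the cost is $k\cdot\tfrac{B}{k-1}+B=\tfrac{(2k-1)B}{k-1}=(2+\tfrac{1}{k-1})B=(2+\tfrac{1}{k-1})\mathrm{OPT}$; so in every case the cost is at most $(2+\tfrac{1}{k-1})\min((k-1)D,B)$, with equality in the second regime. The main (mild) obstacle is bookkeeping the two subcases of the adversary's response and making sure the two monotonicities meet exactly at $x=B/(k-1)$, which is precisely why this threshold is optimal; the integrality assumption $B/(k-1)\in\mathbb{Z}$ noted before the theorem avoids any boundary pathology.
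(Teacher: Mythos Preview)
Your argument is correct and matches the paper's: show the adversary's best response is $D$ at (or just above) $x$, giving worst-case ratio $(kx+B)/\min((k-1)x,B)$, then minimize this over $x$ by splitting at $x=B/(k-1)$ and checking monotonicity on each side. One caveat: your assertion that ``$D\le x$ always gives ratio $1$'' fails when $x>B/(k-1)$ and $B/(k-1)<D\le x$ (there the ratio is $(k-1)D/B>1$); the correct reason the adversary still prefers $D>x$ is simply that $kx+B>(k-1)D$ for every $D\le x$ while $\mathrm{OPT}(D)$ is the same either way, so the abort-side ratio dominates regardless.
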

\begin{proof}
	Let $x$ denote the time step on which we chose to abort. 
	We first notice that we can reduce the case where $x \leq D$ to $x = D$, since our cost will not increase after the abort. 
	Thus, delaying the end of transaction can only decrease the competitive ratio. 
	The competitive ratio for $x = D$ looks as follows
	\begin{align}
	\frac{kx + B}{\min((k-1)x,B)} \label{det_first}
	\end{align}
	For $x > D$ we get 
	\begin{align*}
		\frac{D(k-1)}{\min(D(k-1),B)}
	\end{align*}
	as the competitive ratio. This clearly is a non-decreasing function. 
	Thus the adversary would choose $D = x-1$ to maximize our cost. 
	Therefore, we have competitive ratio $\frac{(x-1)(k-1)}{\min((x-1)(k-1),B)}$. Because this is at most $\frac{kx+B}{\min((k-1)x,B)}$, the adversary will always prefer $D = x$. 
	We obtain that our strategy will always yield competitive ratio \ref{det_first}. Analyzing this we get
	\begin{align*}
		\frac{kx + B}{\min((k-1)x,B)} = & 1 + \frac{x}{\min((k-1)x,B)} + \\  \frac{\max((k-1)x,B)}{\min((k-1)x,B)} , \nonumber \\ 
	\end{align*} 
	which is minimized for $x=\frac{B}{k-1}$, yielding  ratio: 
	\begin{align}
		2+\frac{1}{k-1}.		
		\label{det_ratio} 
	\end{align}

\end{proof}


\subsection{Randomized Strategies for Transactional Conflict} 
\label{sec:constrained}

In this section, we discuss the transactional conflict problem where we have some knowledge about the distribution $\pi(y)$ of the adversarial function. 
For simplicity, we will consider the case where $k = 2$, i.e. the conflict involves only two transactions, and address the general case in a later section.  
Our goal is to prove the following. 

\begin{theorem} \label{rqw::strategy}
	For arbitrary adversarial distributions, the following randomized strategy $p(x)$ is optimal: $p(x) = 1 / B \textnormal{ for }$ \\ $0 \leq x \leq B, \textnormal{ and } 0,$otherwise. This yields competitive ratio $2$ for the conflict cost.

	Knowledge of the mean $\mu$ of the adversarial function $\pi(y)$ yields the following strategy $p(x)$ for the transactional conflict problem.  
	If $\frac{\mu}{B} < 2(\ln 4 -1)$, then the optimal strategy is 
	$$p(x)  = \begin{cases}
	\frac{\ln\frac{B+x}{x}}{B(\ln 4 -1)}, &0 \leq x \leq B \\
	0, &\text{otherwise}.
	\end{cases}$$ 
	In this case, the competitive ratio improves to
	$ 1 + \frac{\mu}{2B (\ln 4 - 1)}. $ 
	If $\frac{\mu}{B} \geq 2(\ln 4 -1),$ then the unconstrained strategy is optimal. 
	
\end{theorem}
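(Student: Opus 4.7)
The argument parallels the Lagrangian analysis given above for the requestor-aborts case, but the different cost structure for requestor-wins will lead to a logarithmic rather than an exponential density. Throughout, write $F(y) = \int_0^y p(x)\,dx$; for $k=2$ the conflict cost is
\begin{equation*}
	\text{Cost}(p,y) = \int_0^y (2x+B)\,p(x)\,dx + y\bigl(1 - F(y)\bigr),
\end{equation*}
with offline optimum $\min(B,y)$, which equals $y$ on $[0,B]$.

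I would cast the problem as the minimax $\min_p \sup_\pi \int \text{Cost}(p,y)/\min(B,y)\,d\pi(y)$, over pdfs $p$ supported on $[0,B]$ and distributions $\pi$ subject to $\int d\pi = 1$ (and $\int y\,d\pi = \mu$ in the constrained case). Introducing multipliers $\lambda_1$ for normalization and $\lambda_2$ for the mean, together with a boundary atom $\pi_K$ at a point $K \geq B$ that captures the mass the adversary may place beyond $B$ (where the algorithm is forced to abort at fixed cost $\int_0^B(2x+B)p(x)\,dx$), the KKT stationarity conditions in $\pi$ yield two pointwise equalities analogous to \ref{constr_ap_1}--\ref{constr_ap_2}:
\begin{align*}
	\frac{\text{Cost}(p,y)}{y} &= \lambda_1 + \lambda_2 y,\qquad y\in[0,B],\\
	\frac{1}{B}\int_0^B (2x+B)\,p(x)\,dx &= \lambda_1 + \lambda_2 K.
\end{align*}

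Next, I would derive an ODE for $p$. Using $\text{Cost}'(y) = (y+B)\,p(y) + 1 - F(y)$, a first differentiation of the first equality above gives $(y+B)\,p(y) + 1 - F(y) = \lambda_1 + 2\lambda_2 y$, and a second differentiation collapses this to $(y+B)\,p'(y) = 2\lambda_2$, solved by
\begin{equation*}
	p(y) = 2\lambda_2\,\ln\tfrac{y+B}{B} + c.
\end{equation*}
Substituting back pins down $c = (\lambda_1 - 1)/B$. The identity $\int_0^B \ln\frac{y+B}{B}\,dy = B(\ln 4 - 1)$, combined with $\int_0^B p = 1$, then couples the two multipliers through $\lambda_1 = 2 - 2\lambda_2\,B(\ln 4 - 1)$; the second stationarity equality is consistent only if $K = B$, analogously to \ref{constr_l_ap}. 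The dual thus reduces to the linear program $\min_{\lambda_1,\lambda_2 \geq 0} \lambda_1 + \lambda_2\mu$ under this coupling and pointwise nonnegativity of $p$; since $p$ is nondecreasing when $\lambda_2 \geq 0$, the latter collapses to $p(0) = (\lambda_1 - 1)/B \geq 0$, i.e.\ $\lambda_2 \leq 1/(2B(\ln 4 - 1))$. The feasible segment has two corner points $(\lambda_1,\lambda_2) = (2,0)$ and $(1,\,1/(2B(\ln 4-1)))$, with objective values $2$ and $1 + \mu/(2B(\ln 4-1))$ respectively; the threshold $\mu/B = 2(\ln 4 - 1)$ in the theorem is exactly their crossover. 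The first corner recovers the uniform density $1/B$ and the unconstrained ratio $2$; the second corner yields the claimed density and competitive ratio for the constrained case.

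The main obstacle I expect is the correct handling of the boundary atom $\pi_K$: without it one obtains only a single stationarity condition and loses the coupling between $\lambda_1$ and $\lambda_2$ needed for the LP reduction. A secondary subtlety is verifying that strong duality holds so that the KKT solution is genuinely minimax-optimal rather than merely stationary; this reduces to a routine convexity check, since the cost functional is linear in $p$ for fixed $\pi$ and linear in $\pi$ for fixed $p$.
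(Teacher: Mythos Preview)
Your proposal is correct and follows essentially the same route as the paper: the same Lagrangian/minimax setup with the boundary atom $\pi_K$, the same two stationarity constraints, double differentiation of the first to obtain $(y+B)p'(y)=2\lambda_2$, and the same reduction to a one-dimensional LP with corner points $(2,0)$ and $(1,1/(2B(\ln 4-1)))$. The only cosmetic difference is that you pin down the integration constant by back-substituting into the once-differentiated constraint (getting $c=(\lambda_1-1)/B$ directly), whereas the paper parameterizes the solution as $\alpha+2\lambda_2\ln(B+x)$ and uses normalization first; the resulting linear relation $\lambda_1=2-2\lambda_2 B(\ln 4-1)$ and the conclusion are identical.
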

\begin{proof} 
It is easy to show that for arbitrary adversarial distributions above randomized strategy yields competitive ratio 2. Optimality of this strategy can be proven using the same techniques as in \cite{constrained_skirental}. Note that for $k>2$, the following strategy is optimal and 2-competitive:  
$p(x) = k-1 / B \textnormal{ for }$ \\ $0 \leq x \leq B/(k-1), \textnormal{ and } 0,$otherwise. 
Proof is analogous to $k=2$ case.

Let $x$ be a time step at which we decide to abort and let $y$ be a time step at which we would finish the computation.
$x$ is chosen with the distribution $p$ and $y$ is chosen with the distribution $\pi$. 
Notice that we will always abort at time step $B$ the latest, because otherwise our cost will be greater than if we had aborted at time step 0.  If $y < x$, we pay $y$. Otherwise, we pay $2x + B$. 
Let $Cost(p(x),y)$ be the expected cost for the fixed $y$. We have that
  \begin{align*}
	\text{Cost}(p(x),y) &= \int_{0}^{y}(2x+B)p(x) dx + y \cdot \int_{y}^{B}p(x)dx.
	\end{align*}
We know that the optimal cost is $\min(y,B)$. This gives us a necessity to introduce non-zero probability $\pi_K$ for $y$ to be chosen from outside the interval $[0,B]$, because otherwise the trivial optimal strategy is to never abort. Our goal is to minimize the ratio
	
\begin{align*}
	\mathcal{G}(p(x),\pi(y)) &= \int_{0}^{B}\frac{\text{Cost}(p(x),y)}{y}\pi(y)dy  \\ & + \pi_K \cdot \int_{0}^{B}\frac{2x + B}{B}p(x)dx.
\end{align*} 
Considering that we want to find the best possible solution under the worst case distribution $\pi(y)$ of the adversary our problem looks as follows
\begin{align}
	\min_{p(x)} \max_{\pi(y)} \mathcal{G}(p(x),\pi(y)). \label{max_prob}
\end{align}

We will give a closed-form solution for this as follows. We first construct the Lagrangian function $\Lagr$ for the maximization problem in \ref{max_prob}. Afterwards we will construct its dual to get the necessary constraints. We will then differentiate one of the constraints twice, leaving us with a first order differential equation for $p(x)$. Solving this yields the desired PDF $p(x)$. Finally, we substitute $p(x)$ into the constraints to solve for the Lagrangian multipliers. 

The Lagrangian function $\Lagr$ takes as inputs the probability distribution $\pi(y)$ of the adversary and the two Lagrange multipliers $\lambda_1$ and $\lambda_2$. The two multipliers act as weights for the constraints given by the fact that $
\pi(y)$ is a PDF with the mean $\mu$. Thus, we get the following Lagrangian for the maximization problem:

\begin{align*}
	\Lagr(\pi(y), \lambda_1, \lambda_2) &= \int_{0}^{B} \frac{\text{Cost}(p(x), y)}{y}\pi(y)dy  \\ &+\pi_K \cdot \int_{0}^{B} \frac{2x+B}{B}p(x)dx \\
	\\ &- \lambda_1 \cdot \left[\int_{0}^{B}\pi(y)dy + \pi_K - 1\right] \\ &- \lambda_2 \cdot \left[\int_{0}^{B}y\pi(y)dy + K \cdot \pi_K - \mu \right] \\ &= \int_{0}^{B} \left(\frac{\text{Cost}(p(x), y)}{y} - \lambda_1 - \lambda_2y\right) \cdot \pi(y)dy \\
	&+ \pi_K \cdot \left( \int_{0}^{B} \frac{2x+B}{B}p(x)dx - \lambda_1 - \lambda_2K\right) 
	\\ &+ \left(\lambda_1 + \lambda_2\mu\right).
\end{align*}
Therefore, its dual is 
$	h(\lambda_1, \lambda_2) = \sup_{\pi(y)} \Lagr(\pi(y),\lambda_1,\lambda_2).$
We note that, since the adversarial strategy is arbitrary and we want to minimize the dual function, we will need constraints
$\frac{\text{Cost}(p(x), y)}{y} - \lambda_1 - \lambda_2y = 0$
and
$\int_{0}^{B} \frac{2x+B}{B}p(x)dx - \lambda_1 - \lambda_2K = 0$, to ensure that the Lagrangian function does not depend on the choice of $\pi(y)$. Therefore  we get the following minimization problem:
\begin{align}
	\min_{p(x), \lambda_1, \lambda_2} \lambda_1 + \lambda_2\mu \text{ s.t. } \nonumber \\
	\frac{\int_{0}^{y}(2x+B)p(x) dx + y \cdot \int_{y}^{B}p(x)dx}{y} = \lambda_1 + \lambda_2y \label{constr_diff_1} \\
	\int_{0}^{B}\frac{2x+B}{B}p(x)dx = \lambda_1 + \lambda_2K \label{constr_diff_2} \\
	\forall \lambda_1, \lambda_2 \geq 0 \text{ and } y \in [0,B] \nonumber
\end{align}

Because  the constraint \ref{constr_diff_1} holds for every $y \in [0, B]$, we can differentiate it twice with respect to $y$ and after we substitute $y$ with $x$ we get the first order differential equation
$	p'(x) = \frac{1}{B+x}  2\lambda_2$.
Solving this yields our strategy $p(x) = \alpha + 2\lambda_2 \ln (B+x)$.
Using the fact that $p(x)$ is a PDF we get that $\alpha = \frac{1}{B} - 2\lambda_2(\ln B + \ln 4 -1)$.
Since $\forall x, p(x) \geq 0$ and $\lambda_2 \geq 0$ we further have 
$0 \leq \lambda_2 \leq \frac{1}{2B (\ln 4 - 1)}$. Thus our problem becomes:

\begin{align}
	\min_{\lambda_1, \lambda_2} \lambda_1 + \lambda_2\mu \text{ such that } \nonumber \\
	\left[(4\ln 2B - 4 \ln 4B + 2)B\right] \lambda_2 + 2 &= \lambda_1 \label{constr_final_1} \\
	\left[(4\ln 2B - 4 \ln 4B + 3)B - K\right] \lambda_2 + 2 &= \lambda_1 \label{constr_final_2} \\
	\forall \lambda_1  \geq 0, 0 \leq \lambda_2 \leq \frac{1}{2B (\ln 4 - 1)}  \nonumber
\end{align}
We immediately realize that for $\lambda_2 \neq 0$ we must have $K = B$ in order to satisfy \vref{constr_final_1} and \ref{constr_final_2} simultaneously. We remind ourselves that the solutions to this linear program have to form a convex polytope and that each basic feasible solution is a corner point of that polytope. Therefore we get two corner points
$	c_1 = \left(2,0\right) \text{ and } c_2 = \left(1,\frac{1}{2B (\ln 4 - 1)}\right)
$
with corresponding competitive ratios
$	C_1 = 2 \text{ and } C_2 = 1 + \frac{\mu}{2B (\ln 4 - 1)}$.
Notice that for $\lambda_2 = 0$ we do not take the mean into account and thus are left with the solution for the non-constrained ski rental problem. For $\lambda_2 \neq 0$ we realize that $C_2 < C_1$ is only valid for small values of $\mu$. More precisely, $C_2 < C_1, \text{ for }  \frac{\mu}{B} < 2(\ln 4 -1)$.  Thus, 
if $\frac{\mu}{B} < 2(\ln 4 -1)$ the optimal PDF is:
\begin{align*}
	p(x) = \begin{cases}
		\frac{\ln\frac{B+x}{B}}{B(\ln 4 -1)}, &0 \leq x \leq B \\
		0, &\text{otherwise}.
	\end{cases}
\end{align*}
Otherwise, the unconstrained strategy is optimal.
\end{proof}

\subsection{Discussion}

\paragraph{Competitive Ratio Comparison} We compare this result with the requestor aborts case (classic ski rental)~\cite{constrained_skirental}. Depending on the case distinction, we get:
\begin{itemize}
	\item inequality holds: In this case we have a competitive ratio of $1 + \frac{\mu}{2B (\ln 4 - 1)}$ for requestor wins and $1 + \frac{\mu}{2B ( e-2 )}$ for requestor aborts. Clearly, requestor aborts outperforms requestor wins. Additionally, we get that the inequality which has to hold in order for this strategy to be applicable, is less strict for the requestor aborts case.
		
	\item inequality does not hold: If the inequality does not hold we get competitive ratio $2$ for the requestor wins strategy and $\frac{e}{e-1}$ for the requestor aborts strategy. Again, we notice that the competitive ratio of requestor aborts is smaller than the one for requestor wins.
\end{itemize}

Thus, we can conclude that with respect to the competitive ratio for a single waiting transaction ($k = 2$), the transactional conflict problem should be tackled by the requestor aborts strategy. We note that this may no longer be the case for $k \geq 3$. 

\paragraph{Abort probability} We now examine the probability that a transaction aborts, in the case where $y \leq B$. (Otherwise, it is more advantageous to abort the current transaction than to delay the system for $> B$ steps.) 
Notice that in this case, the adversary's best strategy is to set $y = B$. 
We obtain the following by direct computation:
\begin{itemize}
	\item requestor wins: $1-p (B) = 1 - \frac{\ln 2}{B ( \ln 4 - 1 )} \simeq 1 - 1.8 / B$
	\item requestor aborts: $1-p(B) = 1 - \frac{e-1}{B(e-2)} \simeq 1 - 2.4 / B.$
\end{itemize}
Hence, the requestor aborts optimal strategy is less likely to abort a transaction, under the same conditions.


\subsection{Constrained Problem for \\ Conflict Size $k>2$ }

A more involved analysis solves the general case where conflicts are of size $k > 2$.
\begin{theorem} \label{constrkmore2}
Consider an instance of requestor-wins transactional conflict where $k \geq 3$ transactions are involved. 
Given the fixed cost $B >0$, and the mean of the adversarial distribution $\mu > 0$, then the optimal PDF $p(x)$ is as follows. 

\noindent If $\frac{\mu}{B} \le \frac{k^{k-1}-2(k-1)^{k-1}}{(k-2)\left(k^{k-1}-(k-1)^{k-1}\right)}$, then :
$$
p(x)=
\begin{cases}
\frac{(B+x)^{k-2}(k-1)^k\left( 2(k-1)^{k-1}+k^{k-1}\right)}{B^{k-1}\left(k^{k-1}-(k-1)^{k-1}\right)\left(k^{k-1}-2(k-1)^{k-1}\right)}-\\ \frac{4(k-1)^k}{B\left(k^{k-1}-2(k-1)^{k-1}\right)},   0\le x \le \frac{B}{k-1},\\
0,  \text{otherwise}.\\
\end{cases}
$$
Otherwise, or if the mean is unknown, the optimal PDF is :
$$
p(x)=
\begin{cases}
\frac{(B+x)^{k-2}(k-1)^k}{B^{k-1}\left(k^{k-1}-(k-1)^{k-1}\right)} , & 0\le x \le \frac{B}{k-1} ,\\
0, & \text{otherwise} .\\
\end{cases}
$$
\end{theorem}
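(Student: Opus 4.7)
The plan is to extend the Lagrangian duality argument from the proof of Theorem~\ref{rqw::strategy} to general $k\geq 3$, using the $k$-transaction cost function from Section~\ref{sec:req_wins}:
\begin{align*}
\text{Cost}(p(x),y) &= \int_0^y (kx+B)\,p(x)\,dx \\
&\quad + (k-1)\int_y^{B/(k-1)} y\,p(x)\,dx,
\end{align*}
with offline optimum $\min(B,(k-1)y)$. As before, I would introduce a point mass $\pi_K$ at some $K>B/(k-1)$ so that the never-abort strategy is not trivially optimal, form the Lagrangian of the inner maximization over $\pi$ with multipliers $\lambda_1,\lambda_2\geq 0$ for the normalization and mean constraints, and read off the pointwise stationarity condition $\frac{\text{Cost}(p(x),y)}{(k-1)y}=\lambda_1+\lambda_2 y$ for $y\in[0,B/(k-1)]$, together with an analogue of \eqref{constr_diff_2} at $y=K$.

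The next step is to differentiate the pointwise stationarity condition twice in $y$, after clearing the denominator. Leibniz' rule collapses the first derivative using the identity $ky+B-(k-1)y=y+B$, and a second differentiation yields the first-order linear ODE
\begin{equation*}
(y+B)\,p'(y) - (k-2)\,p(y) = 2(k-1)\lambda_2 .
\end{equation*}
With integrating factor $(y+B)^{-(k-2)}$, this solves to
\begin{equation*}
p(x) = \alpha\,(B+x)^{k-2} - \frac{2(k-1)\lambda_2}{k-2},
\end{equation*}
where $\alpha$ is determined by $\int_0^{B/(k-1)} p(x)\,dx = 1$. The identity $B+B/(k-1)=Bk/(k-1)$ makes this normalization close up cleanly, expressing $\alpha$ as an affine function of $\lambda_2$ whose denominator is $B^{k-1}\bigl(k^{k-1}-(k-1)^{k-1}\bigr)$; plugging in $\lambda_2=0$ recovers the second (unconstrained) PDF in the theorem statement.

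Substituting $p$ back into the pointwise stationarity condition forces $\lambda_1$ to be an affine function of $\lambda_2$, and substitution into the $\pi_K$-constraint pins down $K=B/(k-1)$, exactly mirroring \eqref{constr_final_1}--\eqref{constr_final_2} in the $k=2$ proof. Enforcing $p(x)\geq 0$ on $[0,B/(k-1)]$ reduces to the single inequality $p(0)\geq 0$, since $(B+x)^{k-2}$ is nondecreasing and the additive constant is nonpositive when $\lambda_2>0$. This caps $\lambda_2$ by an explicit $\lambda_2^{\max}$ involving $k^{k-1}-2(k-1)^{k-1}$. The minimization $\min\,\lambda_1+\lambda_2\mu$ over the resulting one-dimensional feasible segment therefore attains its optimum at one of two corners, $(\lambda_2=0,\lambda_1=C_1)$ or $(\lambda_2=\lambda_2^{\max},\lambda_1=C_2)$, which produce the two PDFs claimed in the theorem.

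The final step is the case distinction. Evaluating $C_1$ and $C_2$ in closed form and solving $C_2\leq C_1$ for $\mu/B$ should yield exactly the threshold $\frac{k^{k-1}-2(k-1)^{k-1}}{(k-2)\bigl(k^{k-1}-(k-1)^{k-1}\bigr)}$; above this threshold the unconstrained corner $C_1$ wins. The main obstacle I expect is algebraic bookkeeping: every intermediate quantity ($\alpha$, $\lambda_2^{\max}$, $C_1$, $C_2$, and the threshold) is a rational function of $B$, $k^{k-1}$, and $(k-1)^{k-1}$, so care is needed when dividing by $k-2$ (which degenerates at $k=2$ and explains why the $k=2$ PDF in Theorem~\ref{rqw::strategy} has a logarithmic rather than polynomial form) and when tracking the sign of $k^{k-1}-2(k-1)^{k-1}$, which governs whether the constrained corner is feasible in the first place.
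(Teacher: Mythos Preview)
Your proposal is correct and follows essentially the same approach as the paper's own proof: form the Lagrangian with multipliers $\lambda_1,\lambda_2$, extract the pointwise stationarity condition, differentiate twice to obtain the ODE $(x+B)p'(x)-(k-2)p(x)=2(k-1)\lambda_2$, solve for $p$, normalize to determine $\alpha$, substitute back to express $\lambda_1$ affinely in $\lambda_2$ and force $K=B/(k-1)$, bound $\lambda_2$ via $p(0)\geq 0$, and then compare the two corner values of $\lambda_1+\lambda_2\mu$ to obtain the threshold on $\mu/B$. The only differences are presentational (you mention the integrating factor explicitly and flag the $k=2$ degeneration), not substantive.
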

\begin{proof}
We can follow the same steps as in the $k=2$ case, but with more technical care. 
Our cost function becomes:
\begin{align*}
	\text{Cost}(p(x),y) &= \int_{0}^{y}(kx + B) p(x) dx + (k-1) \int_{y}^{\frac{B}{k-1}} y p(x) dx,
\end{align*}
and so the corresponding Lagrangian function is:
\begin{align*}
	L(p(x),\lambda_1, \lambda_2) = &\int_{0}^{\frac{B}{k-1}} \left( \frac{\text{Cost(p(x),y)}}{(k-1)y}  -\lambda_1 - \lambda_2 y \right)\pi(y) dy \\
	&+ \pi_K \cdot \left( \int_{0}^{\frac{B}{k-1}} \frac{kx+B}{B} p(x) dx - \lambda_1 - \lambda_2K \right) \\
	& + \lambda_1 + \lambda_2 \mu.
\end{align*}

We note that, since the adversarial strategy is arbitrary, we will need the following two constraints:
\begin{align}
	\frac{\text{Cost(p(x),y)}}{(k-1)y} &= \lambda_1 + \lambda_2 y \\ \label{Constr1}
	\int_{0}^{\frac{B}{k-1}} \frac{kx+B}{B} p(x) dx &= \lambda_1 + \lambda_2K \\ 
	\nonumber
\end{align}

which yields the linear program:
\begin{align}
	\min_{p(x), \lambda_1, \lambda_2} \lambda_1 + \lambda_2\mu \text{ s.t. } \nonumber \\
	\frac{\int_{0}^{y}(kx+B)p(x) dx + y \cdot \int_{y}^{B}p(x)dx}{(k-1)y} = \lambda_1 + \lambda_2y \label{Constr1} \\
	\int_{0}^{B}\frac{kx+B}{B}p(x)dx = \lambda_1 + \lambda_2K \label{Constr2} \\
	\forall \lambda_1, \lambda_2 \geq 0 \text{ and } y \in [0,\frac{B}{k-1}] \nonumber
\end{align}

By differentiating \ref{Constr1} twice with respect to $y$ and substituting $y$ with $x$ we get:
\begin{align*}
p'(x) ( x + B ) - ( k-2 ) p(x)&= 2 (k-1) \lambda_2.
\end{align*}

Solving this first order differential equation we get
\begin{align*}
	p(x) &= \alpha ( B+x )^{ k-2 } - \frac{2\lambda_2 ( k - 1 )}{k-2}.
\end{align*}

Using the fact that $p(x)$ is a PDF we can solve for $\alpha$:
\begin{align*}
	\alpha &= \frac{B ( k - 1 ) k ( 2 \lambda_2 B+ k - 2 )}{( k - 2 ) \left[ ( k - 1 ) ( \frac{ B k }{ k - 1 })^k - k B^k \right]} \\
	&= \frac{\lambda_2 ( 2 B^2 k ( k - 1 ))}{ (k - 2 ) \left[ ( k - 1 ) ( \frac{ B k }{ k - 1 })^k - k B^k \right]} + \frac{B ( k - 1 ) k}{  ( k - 1 ) ( \frac{ B k }{ k - 1 })^k - k B^k }.
\end{align*}


Substituting $p(x)$ into the constraint \ref{Constr1} yields:
\begin{align*}
\lambda_2 \left( \frac{2B}{k-2} \cdot \frac{k^{k-1}}{k^{k-1}-(k-1)^{k-1}}-\frac{4B}{k-2} \right) &+ \\ \frac{k^{k-1}}{k^{k-1}-(k-1)^{k-1}} &= \lambda_1.
\end{align*}

We know that $p(x)$ is a PDF so we have $p(x) \ge 0$ for $x \in [0, \frac{B}{k-1}]$. This means that $P(x) = \alpha ( B + x )^{ k-2 }- 
\frac{2\lambda_2(k-1)}{k-2} \ge 0$. Also, $\alpha > 0$ and $B>0$, and so $p(x)$ is an increasing function. Thus, it is enough to check if $p(0) \ge 0$. This gives us following constraint on $\lambda_2$:

\begin{align*}
 \lambda_2 \le \frac{2(k-2)}{B}\frac{\frac{(k-1)^{k-1}}{k^{k-1}-(k-1)^{k-1}}}{1-\frac{(k-1)^{k-1}}{k^{k-1}-(k-1)^{k-1}}} &= \frac{2(k-2)(k-1)^{k-1}}{B\left(k^{k-1}-2(k-1)^{k-1}\right)}
 \end{align*}
 
 So our problem becomes:
 
 \begin{align}
	\min_{\lambda_1, \lambda_2} \lambda_1 + \lambda_2\mu \text{ s.t. } \nonumber \\
\lambda_2 \left( \frac{2B}{k-2} \cdot \frac{k^{k-1}}{k^{k-1}-(k-1)^{k-1}}-\frac{4B}{k-2} \right) &+ \nonumber \\ \frac{k^{k-1}}{k^{k-1}-(k-1)^{k-1}} &= \lambda_1 \\ \label{Constr4}
  \lambda_2 \left( \frac{2B}{k-2} \cdot \frac{k^{k-1}}{k^{k-1}-(k-1)^{k-1}}-\frac{4B}{k-2} + 
  \frac{B}{k-1}-K \right) &+ \nonumber \\ \frac{k^{k-1}}{k^{k-1}-(k-1)^{k-1}} &= \lambda_1\\ \label{Constr5}
  \forall \lambda_1 \ge 0, 0\le  \lambda_2  \le \frac{2(k-2)(k-1)^{k-1}}{B\left(k^{k-1}-2(k-1)^{k-1}\right)} \nonumber
 \end{align}
 It is easy to see that we must have $K=\frac{B}{k-1}$.
 We also know that the solutions of LP form a convex polytope and this gives us two corner points : 
$$\left(\frac{k^{k-1}}{k^{k-1}-(k-1)^{k-1}}, 0\right)$$ and 
$$\left(\frac{k^{k-1}-2(k-1)^{k-1}}{k^{k-1}-(k-1)^{k-1}}, \frac{2(k-2)(k-1)^{k-1}}{B\left(k^{k-1}-2(k-1)^{k-1}\right)}\right).$$  
The corresponding ratios are : $\frac{k^{k-1}}{k^{k-1}-(k-1)^{k-1}}$ and \\ $\frac{k^{k-1}-2(k-1)^{k-1}}{k^{k-1}-(k-1)^{k-1}}+\frac{2\mu(k-2)(k-1)^{k-1}}{B\left(k^{k-1}-2(k-1)^{k-1}\right)}$.
\end{proof}

For  large $k$, using Theorem \ref{constrkmore2} and    $\left({\frac{k}{k-1}}\right)^{k-1} \simeq e$, we get that, if $\frac{\mu}{B} \le \frac{e-2}{(k-2)(e-1)}$, then the optimal PDF is:
$$
p(x)=
\begin{cases}
\frac{(B+x)^{k-2}(k-1)(2+e)}{B^{k-1}(e-1)(e-2)}-\frac{4(k-1)}{B(e-2)}, &\text{if } 0 \le x \le \frac{B}{k-1}\\
0, & \text{otherwise} .\\
\end{cases}
$$
and otherwise the optimal PDF is:
$$
p(x)=
\begin{cases}
\frac{(B+x)^{k-2}(k-1)}{B^{k-1}(e-1)}, &\text{if } 0\le x \le \frac{B}{k-1}\\
0, & \text{otherwise} .\\
\end{cases}
$$

\section{Competitive Analysis for the Sum of Running Times}
\label{sec:competitive}

%
Given an adversarial strategy $S$, and an algorithm $\mathcal{A}$, recall that $\Gamma(T, \mathcal{A})$ is the expected time between the start of the transaction's execution, and the time it committed. 
We define  the \emph{commit cost} $\rho_T$ of a transaction $T$ the number of consecutive time steps it has to execute for in isolation in order to commit. 

Given the optimal decision algorithm, a transaction $T$, and an adversarial strategy $S$, let the \emph{abort cost} $\alpha_T (S)$ under an adversarial strategy be the length of time 
that the algorithm spends executing $T$ \emph{minus} the commit cost $\rho_T$. 
Notice that, in the optimal algorithm, every time a transaction is interrupted by the adversary's strategy, the choice of whether to continue or not is deterministic, based on the length of time for which the transaction has already executed, on the abort cost, and on the remaining length of time for which the transaction has to execute.
Given an adversarial strategy $S$, define $w(S)$ as the \emph{waste} of the optimal algorithm given $S$, defined as $\sum_T \alpha_T(S) / \sum_T \rho_T$. 

\begin{corollary}
	Under the above conflict model, let $\mathcal{A}$ be the randomized requestor wins strategy, and let $OPT$ be the offline optimal algorithm. 
	Then we have that $$ \frac{\sum_T \Gamma(T, \mathcal{A}) }{\sum_T \Gamma(T, OPT) } \leq \frac{2 w(S) + 1}{w(S) + 1}.$$
\end{corollary}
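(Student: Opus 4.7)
The plan is to reduce the global bound to the per-conflict competitive ratio already established in Theorem \ref{rqw::strategy}, exploiting the restrictions on the adversary listed in Section \ref{sec:model} (conflicts are acyclic, a requestor cannot become a receiver, and a transaction in its grace period cannot be re-conflicted as a receiver). The key identity to set up first is the decomposition $\Gamma(T,\mathcal{A}) = \rho_T + \alpha_T(\mathcal{A})$ for any algorithm, which follows directly from the definitions of $\rho_T$ and $\alpha_T$. Summing over all transactions gives $\sum_T \Gamma(T,\mathcal{A}) = R + A_\mathcal{A}$ and $\sum_T \Gamma(T,OPT) = R + A_{OPT}$, where $R := \sum_T \rho_T$, $A_\mathcal{A} := \sum_T \alpha_T(\mathcal{A})$, and $A_{OPT} := \sum_T \alpha_T(OPT)$. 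By the definition of $w(S)$ we then have $A_{OPT} = w(S)\cdot R$.

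Next I would argue that the total abort cost $A_\mathcal{A}$ decomposes as a sum over individual conflicts. Because of assumptions (a) and (b) in Section \ref{sec:model}, whenever the adversary generates a conflict, the receiver is not simultaneously involved in another delayed conflict as a receiver, so the delay/abort decision for that conflict can be charged locally to a single event. Acyclicity (c) ensures that the conflict chain associated with each event is well-defined and has a finite size $k$. In particular, the contribution of a single conflict event to the running-time penalty $A_\mathcal{A}$ is exactly the per-conflict cost function $\mathrm{Cost}(p(x),D)$ analyzed in Section \ref{sec:req_wins}, and the same holds for $OPT$ with its perfect-information cost $\min(B,(k-1)D)$.

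The crucial step is then to apply Theorem \ref{rqw::strategy}, which states that the randomized requestor-wins strategy is $2$-competitive per conflict. Applied event-by-event and summed (by linearity of expectation over the possibly randomized choices) against the same adversarial strategy $S$ on both sides, this yields
\begin{equation*}
A_\mathcal{A} \;\le\; 2\, A_{OPT}.
\end{equation*}
Substituting into the ratio,
\begin{equation*}
\frac{\sum_T \Gamma(T,\mathcal{A})}{\sum_T \Gamma(T,OPT)} \;=\; \frac{R + A_\mathcal{A}}{R + A_{OPT}} \;\le\; \frac{R + 2A_{OPT}}{R + A_{OPT}} \;=\; \frac{1 + 2w(S)}{1 + w(S)},
\end{equation*}
which is exactly the claimed bound.

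The main obstacle is justifying the additivity of abort costs across conflict events, i.e., showing that the same adversarial strategy $S$ induces comparable conflict sequences for $\mathcal{A}$ and for $OPT$ so that the per-conflict competitive ratio can be summed. This is where assumptions (a)--(c) do essential work: without them, a delayed decision by $\mathcal{A}$ could cause the adversary to inject additional conflicts that $OPT$ never sees, breaking the one-to-one correspondence of conflict events. Once this correspondence is established, the rest is arithmetic and the previously proven local $2$-competitiveness.
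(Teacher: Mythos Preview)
Your proposal is correct and follows essentially the same approach as the paper's proof: decompose each $\Gamma(T,\cdot)$ into the commit cost $\rho_T$ plus a conflict/abort penalty, use the conflict-model assumptions to ensure both $\mathcal{A}$ and $OPT$ face the same set of conflict events, apply the per-conflict $2$-competitiveness from Theorem~\ref{rqw::strategy} to bound the total penalty of $\mathcal{A}$ by twice that of $OPT$, and finish with the arithmetic $\frac{R+2A_{OPT}}{R+A_{OPT}}=\frac{1+2w(S)}{1+w(S)}$. The paper phrases the penalty as $\sum_C \mathrm{Cost}(C,\cdot)$ indexed by conflicts and amortized to the receiver transaction, whereas you phrase it as $A_\mathcal{A}=\sum_T\alpha_T(\mathcal{A})$, but these are the same quantity and the argument is otherwise identical.
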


\begin{proof}
Let $C$ be a conflict which arises for algorithm $\mathcal{A}$ following strategy $S$. From our conflict model, we know that the same conflict $C$ must arise for the optimal decision algorithm as well, although the decision may be different. 
 We know that there is one \emph{receiver} transaction $T$ involved in the conflict, to which we will \emph{amortize the cost of this conflict}. 
 For our algorithm $\mathcal{A}$, let $Cost(C, \mathcal{A})$ be the conflict cost, that is, the sum of delays caused by transaction $T$, plus the abort cost and delay to $T$ incurred in case $T$ does not commit. 

 Analogously, let $Cost(C, OPT)$ be the cost of a conflict  $C$
for the optimal decision algorithm. We get that the sum $\sum_C  Cost(C, OPT)=\sum_T \alpha_T(S)$.
Because of the way we defined $Cost(C, OPT)$ and $Cost(C, \mathcal{A})$ we know from the properties of the local decision algorithm that 
$$\frac{Cost(C, \mathcal{A})}{Cost(C, OPT)} \le 2.$$ 

\noindent Therefore, we also have that $$\frac{\sum_C  Cost(C, \mathcal{A})}{\sum_C Cost(C, OPT)} \le 2.$$ 
Finally, notice that:
\begin{align*}
 \frac{\sum_T \Gamma(T, \mathcal{A}) }{\sum_T \Gamma(T, OPT) } &=
\frac{\sum_T \rho_T + \sum_C Cost(C, \mathcal{A})}{\sum_T \rho_T + \sum_C Cost(C, OPT) }  \\ &\le \frac{\sum_T \rho_T + 2\sum_T \alpha_T}{\sum_T \rho_T + \sum_T \alpha_T}= \frac{2 w(S) + 1}{w(S) + 1},
 \end{align*}
\noindent which concludes the proof.
\end{proof}

\section{Throughput versus Progress} 
\label{sec:progress}

As described, our framework optimizes solely for throughput, and does not provide any progress guarantees. 
In particular, a transaction $T$ which consistently incurs conflicts at a time when its abort cost $B$ is smaller than its remaining execution time will always abort, since it is more advantageous for the system overall to abort this transaction than to delay. 
However, we can easily adapt our scheme in a backoff-like manner to address this issue: upon abort, a transaction can increase its future abort cost $B$ by an additive or multiplicative amount, and therefore be less likely to abort on its next execution. 
We consider the multiplicative case here, and obtain the following probabilistic progress guarantee for every transaction: 

\begin{corollary}
	\label{cor:progress}
	A transaction $T$ which encounters $\gamma$ conflicts during its execution and has  a running time $y$,  commits after at most $\log{y}+\log{\gamma}+\log{k}-\log{B}+2$ attemps, with probability at least $\frac{1}{2}$. 
\end{corollary}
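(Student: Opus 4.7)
The plan is to combine the multiplicative backoff on the abort cost with the per-conflict abort probability implied by the randomized requestor-wins strategy of Theorem~\ref{rqw::strategy}. Concretely, I would assume that after each abort the transaction doubles its abort cost, so that at its $i$-th attempt it uses $B_i = 2^{i-1} B$. The goal is then to pick the smallest $t$ for which $\Pr[T \text{ is not committed by attempt } t] \leq 1/2$.

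First I would bound the abort probability of a single conflict occurring during attempt $i$: at such a conflict, with remaining running time $y' \leq y$, the optimal randomized strategy draws the grace-period length $x$ uniformly on $[0, B_i/(k-1)]$, and $T$ is aborted exactly when $x < y'$. This event has probability $y'(k-1)/B_i \leq y(k-1)/B_i$. A union bound over the $\gamma$ conflicts of the attempt then yields
\[
q_i := \Pr[\text{attempt } i \text{ ends in abort}] \;\leq\; \frac{\gamma\, y\, (k-1)}{B_i}.
\]

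Next, the transaction fails to commit in the first $t$ attempts only if each one ends in an abort, so $\Pr[\text{not committed by attempt } t] = \prod_{i \leq t} q_i \leq q_t$. As $B_i$ grows geometrically, it suffices to select the smallest $t$ for which $q_t \leq 1/2$, i.e., $\gamma y (k-1)/(2^{t-1} B) \leq 1/2$. Solving this inequality and using $\log(k-1) \leq \log k$ recovers exactly $t \leq \log y + \log \gamma + \log k - \log B + 2$.

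The main subtlety will be justifying the uniform per-conflict bound $y(k-1)/B_i$: the remaining running time varies between successive conflicts within an attempt, but it is always at most $y$, so the bound is immediate. A tighter estimate could be obtained by chaining the $q_i$'s multiplicatively across attempts (they telescope thanks to the doubling of $B_i$), but the single-attempt bound already suffices for the claimed $1/2$ progress guarantee.
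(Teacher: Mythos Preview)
Your proposal is correct and matches the paper's argument: both show that once the doubled abort cost reaches roughly $2ky\gamma$, a single attempt survives all $\gamma$ conflicts with probability at least $1/2$---you via a union bound over the $\gamma$ conflicts, the paper via Bernoulli's inequality $(1-1/(2\gamma))^\gamma \geq 1/2$, which yield the same numerical bound here. One small cleanup: the displayed step $\prod_{i\leq t} q_i \leq q_t$ is not literally valid since the early $q_i$ can exceed $1$; simply write $\Pr[\text{not committed by }t]\leq\Pr[\text{attempt }t\text{ aborts}]\leq q_t$ directly, as you effectively note in your final paragraph.
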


\begin{proof}
We consider the non-constrained case, where we do not have any knowledge about $y$.
Notice that it is enough to prove that the bounds hold for requestor wins scenario, since
requestor aborts strategy is less likely to abort the transaction. 
As noted in Theorem \ref{rqw::strategy},  the optimal 2-competitive strategy in this case is:
$p(x) = k-1 / B \textnormal{ for }$ \\ $0 \leq x \leq B/(k-1), \textnormal{ and } 0,$.
Consider the situation after the transaction aborts $\log{y}+\log{\gamma}+\log{k}-\log{B}+1$ times.
Let $B'$ be the current abort cost. Since we double this cost  every time transaction aborts, we have that $B' \ge 2ky\gamma$.
Upon each conflict, the probability that we do not abort transaction is :
\begin{equation*}
\frac{\frac{B'}{k-1}-y}{\frac{B'}{k-1}}=1-\frac{y(k-1)}{B'} \ge 1-\frac{1}{2\gamma}.
\end{equation*}
Thus, the probability that transaction $T$ commits after  $\log{y}+\log{\gamma}+\log{k}-\log{B}+2$ attempts is at least:
\begin{equation*}
{(1-\frac{1}{2\gamma})}^{\gamma} \ge 1/2, 
\end{equation*}
\noindent where in the last step we used Bernoulli's inequality.

\end{proof}
\section{Experiments} \label{exp}

\begin{figure*}[t]
\centering
\begin{subfigure}[b]{0.31\textwidth}
\includegraphics[width=\textwidth]{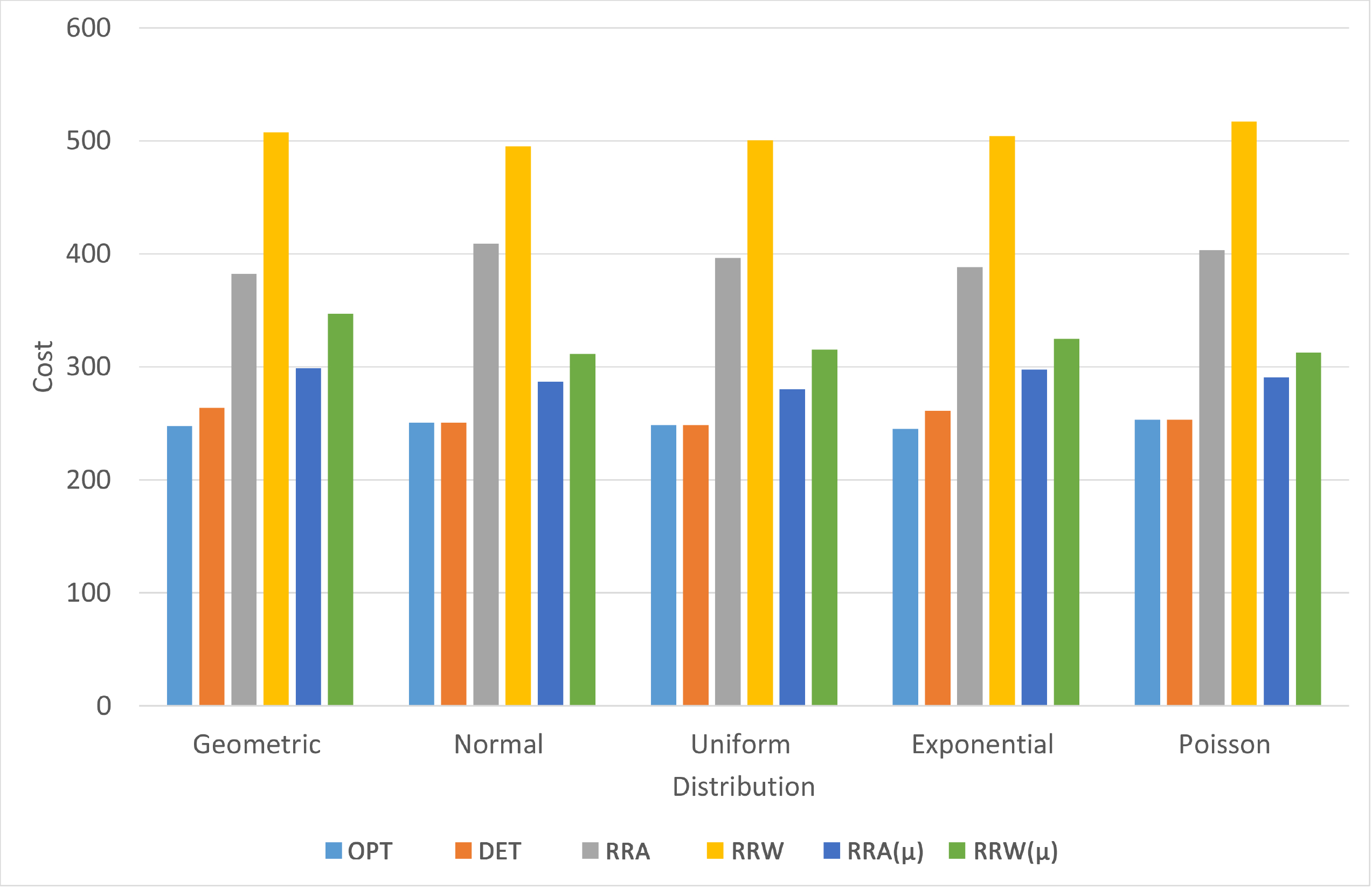}
\caption{Average cost of various strategies in the case with high fixed cost. Here, the fixed abort cost is $B = 2000$, and the mean is $\mu$ = 500.}
\label{fig:1}
\end{subfigure}
~
\begin{subfigure}[b]{0.31\textwidth}
\includegraphics[width=\textwidth]{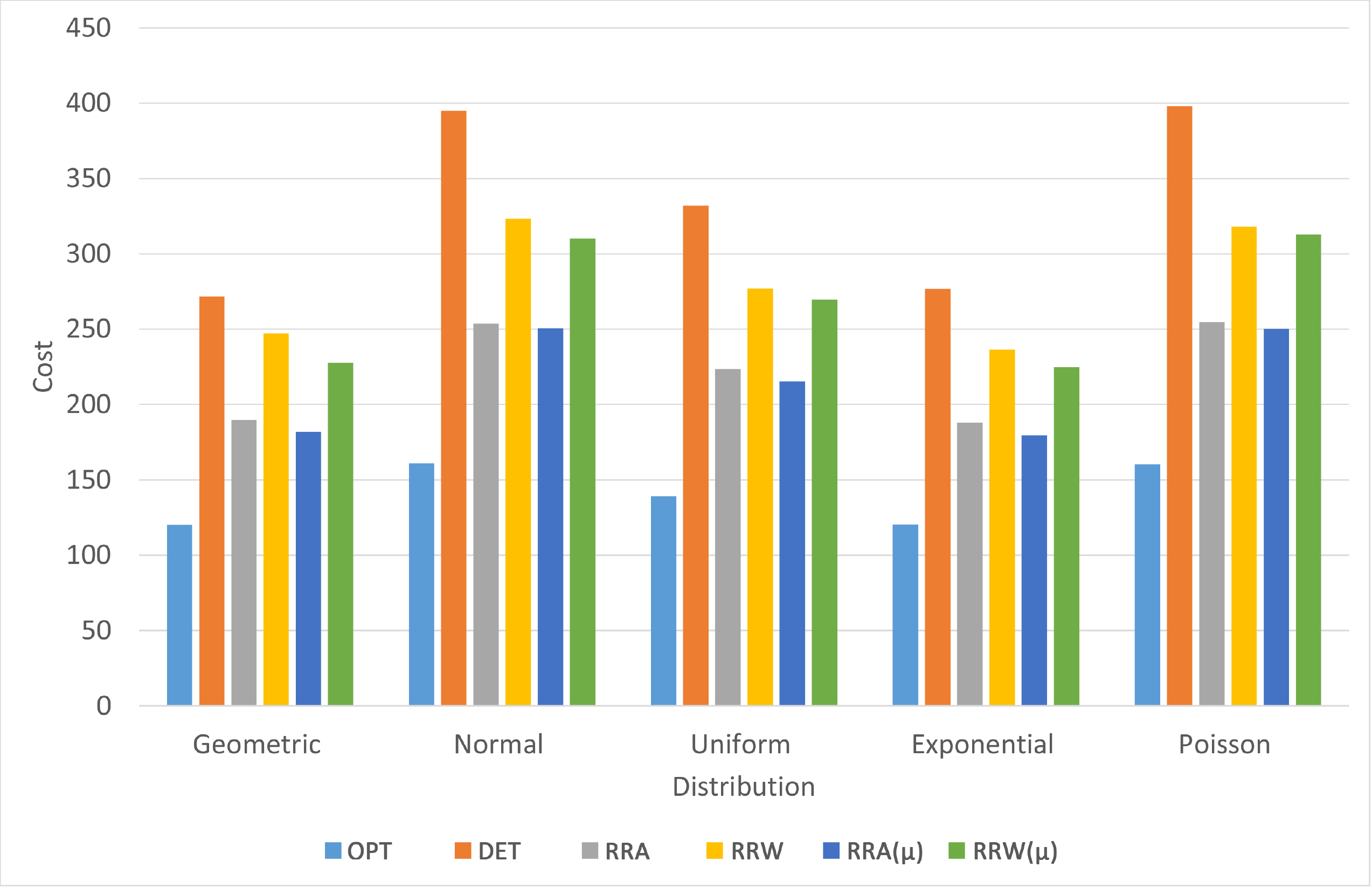}
\caption{Average cost of various strategies in the case with low fixed cost. Here, the fixed abort cost is $B = 200$ and the mean is $\mu$ = 500.}
\label{fig:2}
\end{subfigure}
~
\begin{subfigure}[b]{0.31\textwidth}
	\includegraphics[width=\textwidth]{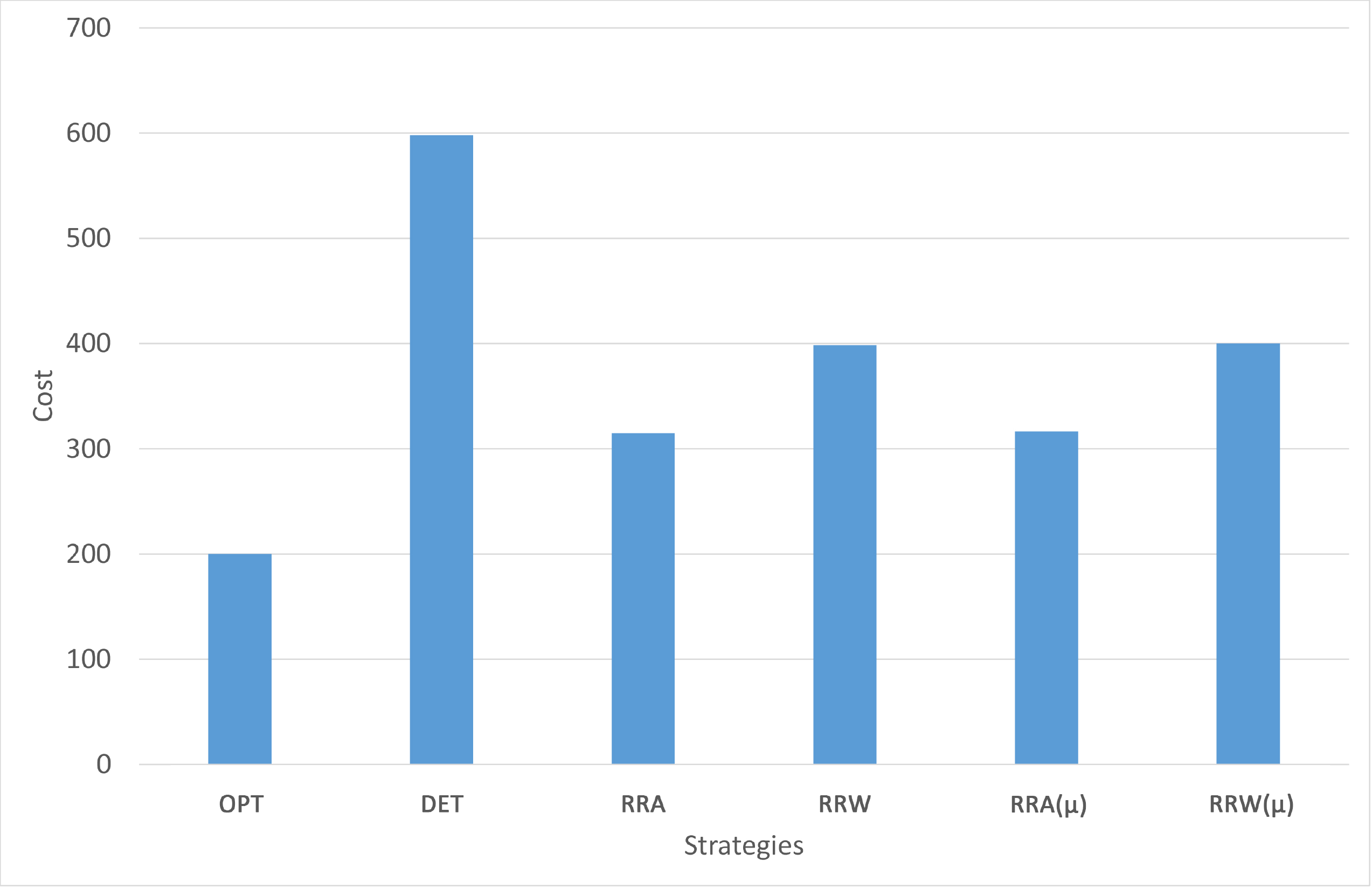}
	\caption{Average cost of various strategies if the adversary uses the  worst-case distribution for the deterministic strategy.}
	\label{fig:worst_case}
\end{subfigure}
\caption{Results of synthetic tests.} 
\end{figure*}

\subsection{Synthetic Tests}
In this section, we examine how various strategies perform for different types of distributions, in the case where two transactions are conflicting. We will use the following abbreviations: 
\begin{itemize}
	\item $RRW(\mu)$ denotes the randomized strategy that we derived for requestor wins using the constraint on the mean.
	\item $RRA(\mu)$ denotes the randomized strategy for requestor aborts using the constraint on the mean.
	\item $RRW$ denotes the randomized strategy for requestor wins we derived without using constraints.
	\item $RRA$ denotes the randomized strategy for requestor aborts without using constraints.
	\item $DET$ denotes the deterministic strategy which we also derived for requestor wins without using constraints.
	\item $OPT$ denotes the optimal strategy.
\end{itemize}

We benchmarked the decision algorithms in Python, as follows: First, we draw the length of the transaction $r$ from a given length distribution. 
We then pick an index $i$ u.a.r. from that length, which will be the point of interrupt in the ski rental problem. Note that $r - i$ is the equivalent to the mean in the ski rental problem. Then, the algorithm picks an index $j$ according to the above strategies, which will be the time at which we abort. Finally, we calculate the cost of this choice. 
The following length distributions were used in the experiment: Geometric, Normal, Uniform, Exponential and Poisson. The results are shown in Figures~\ref{fig:1} and~\ref{fig:2}.


\begin{figure*}[ht]
\centering
\begin{subfigure}[b]{0.45\textwidth}
\includegraphics[width=0.9\columnwidth]{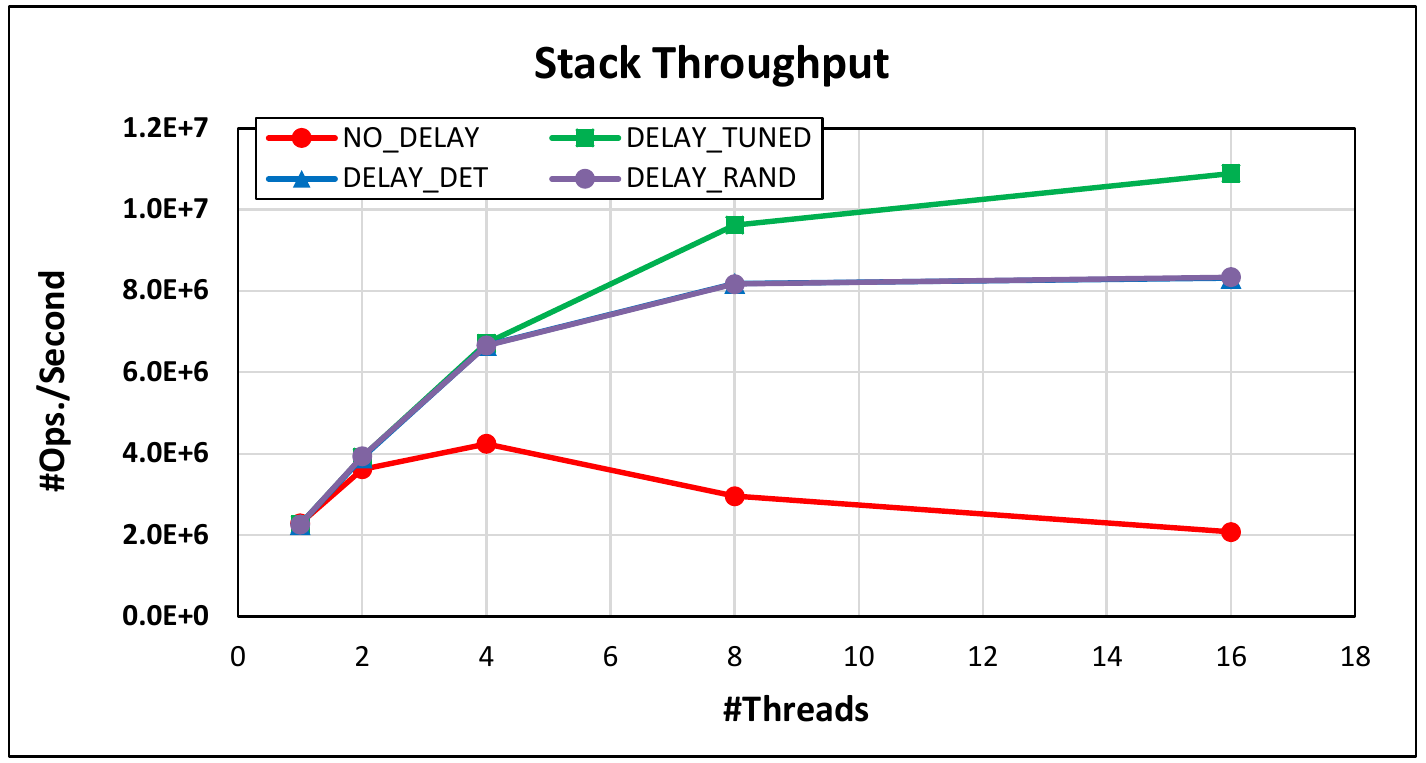}
\end{subfigure}
\begin{subfigure}[b]{0.45\textwidth}
\includegraphics[width=0.9\columnwidth]{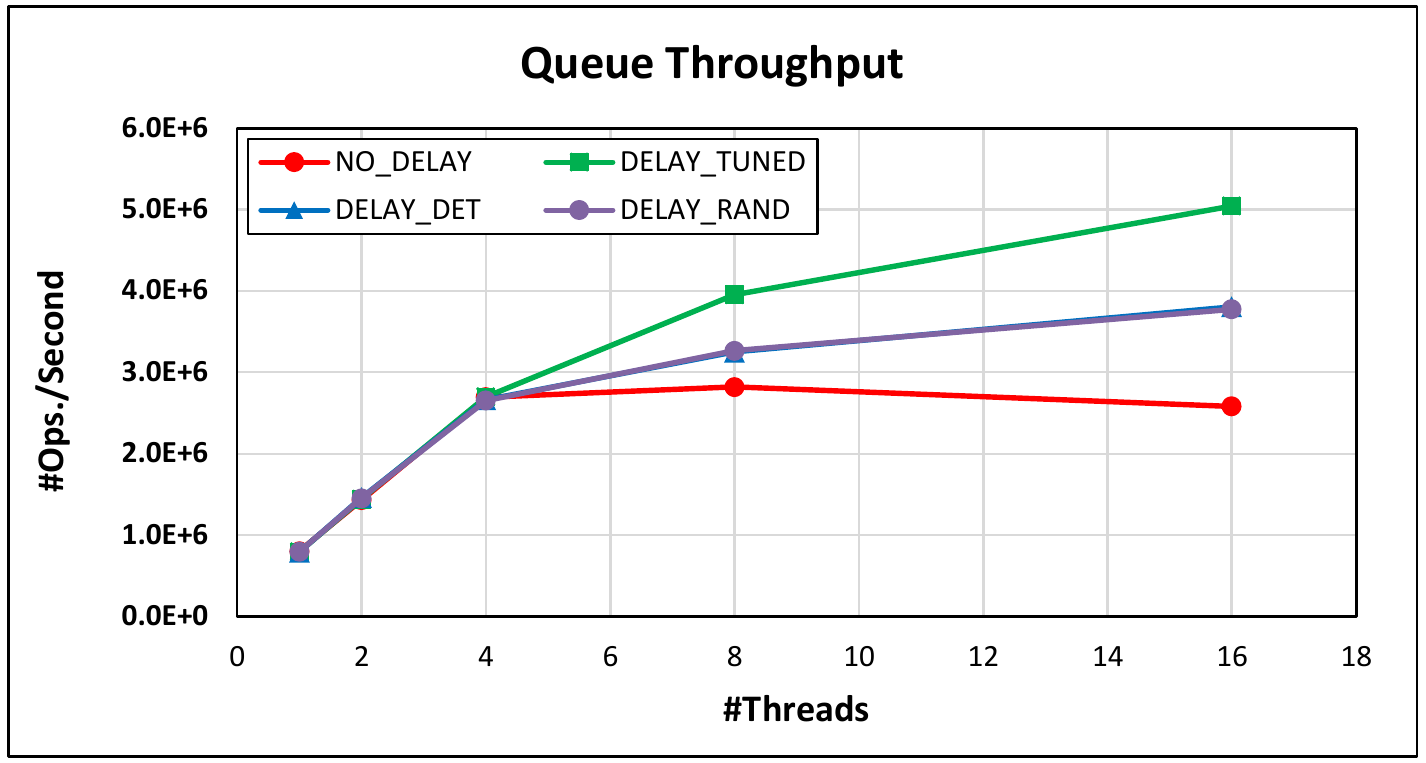}
\end{subfigure}
\\
\begin{subfigure}[b]{0.45\textwidth}
\includegraphics[width=0.9\columnwidth]{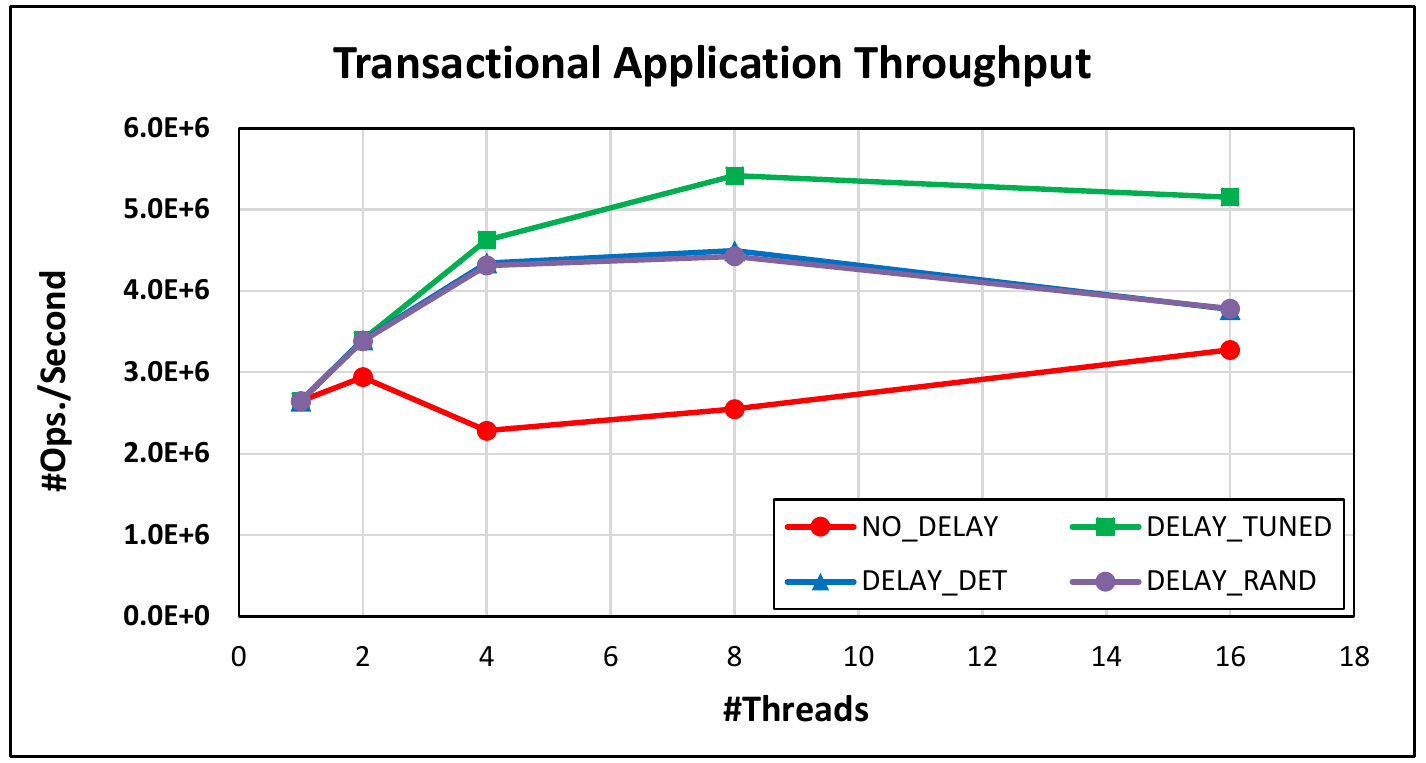}
\end{subfigure}
\begin{subfigure}[b]{0.45\textwidth}
\includegraphics[width=0.9\columnwidth]{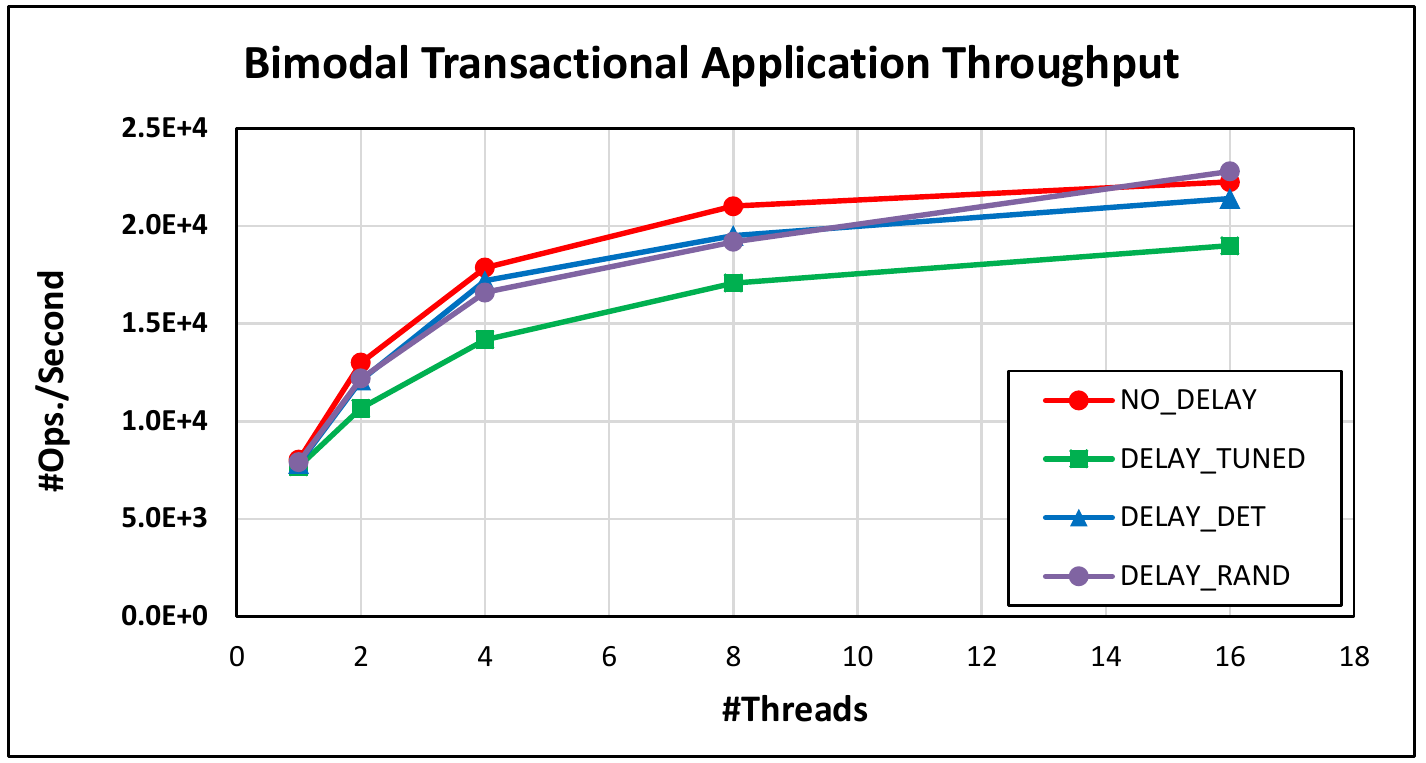}
\end{subfigure}
\caption{Throughput results for lock-free stack, queue, and TL2 benchmark.} 
\label{fig:leasegraphs}
\end{figure*}

Note that $\mu$ stands for the mean of the length distribution and that $B$ denotes the fixed cost that we pay for an abort. Notice that, in Figure \ref{fig:1}, $DET$ performs quite well. The intuitive reason is these distributions are not adversarial. 
Therefore, since we choose a large cost $B$ compared to $\mu$, we (almost) never abort with $DET$. The second observation is that $RRW(\mu)$ and $RRA(\mu)$ perform significantly better than $RRW$ and $RRA$. The reason is that the inequality $\frac{\mu}{B} < 2(\ln 4 -1)$ / $\frac{\mu}{B} \leq 2 \frac{e-2}{e-1}$ holds. The last observation that the cost of $RRW$ and $RRA$ is (almost) exactly $2$, respectively $\frac{e}{e-1}$ times the optimal cost, as predicted by the analysis.

Figure \ref{fig:2} represents the case where the fixed cost $B$ is smaller than $\mu$. This has the following effects on cost: first, $DET$ performs notably worse than before. 
The reason is that $DET$ decides to abort based on the cost, and falls short of this value with higher frequency. 
Also notice that $RRW(\mu)$ and $RRA(\mu)$ and $RRW$ and $RRA$ perform similarly, since the threshold inequality does not hold most of the time. 
Further, we notice that the strategies for requestor aborts, i.e. $RRA(\mu)$ and $RRA$, outperform their respective counterparts from strategy requestor wins.
Figure~\ref{fig:worst_case} illustrates cost under the worst-case distribution for $DET$. 
%

\subsection{Hardware Simulation}

\paragraph{Methodology}
We use Graphite~\cite{graphite}, a tiled multi-core system simulator, to experiment with real applications.
We extend Graphite's directory-based MSI cache coherence protocol for private-L1 shared-L2 cache hierarchy to implement a simple functional equivalent of hardware transactional memory with a requestor-wins policy and lazy validation. 
In particular, the L1 cache controller logic (at the cores) is modified for this purpose, while the directory logic did not have to be modified in any way. 
We test various conflict resolution techniques on top of this setup. 
In particular, we implement the randomized and deterministic delay-setting strategies, as well as a hand-tuned version, which decides on the amount of delay based on knowledge of the dataset and implementation.  
We experiment with two contended data structures implemented using HTM in this setting: a stack and a queue, as well as a simple transactional application. The stack and the queue use lock-free designs as ``slow path" backups. 
The stack and the queue simply alternate inserts and deletes. The transactional application executes transactions which need to jointly acquire and modify two out of a set of $64$ objects in order to commit. 

\paragraph{Results} Please see Figure~\ref{fig:leasegraphs} for the results. The contended data structures are meant to illustrate a setting where the transaction lengths are short and stable. 
In this case, the hand-tuned algorithm does predictably well, since we are able to identify the exact average fast-path length of transactions. Its performance is closely followed by the two online algorithms, which improve significantly on the version which does not implement delays. We observe similar results for the transactional application with uniform transaction lengths. When the transactional application alternates between short and long transactions (the bimodal experiment, in which transactions alternate between short and very long transactions), we notice that  the hand-tuned implementation loses performance, as the transaction length is less predictable. At the same time, the version with no delays performs well, since it tends to abort long transactions, which favors short ones. 
Of note, the randomized algorithm outperforms all other strategies at high contention and high variance, as expected.

\section{Conclusion}

This paper considered the problem of resolving conflicts between transactions using online decision techniques, and presented optimal algorithms for various scenarios arising in real implementations. 
Our results outline a difference between the performance of the requestor aborts and requestor wins strategies in different contention scenarios, and suggest that adding delays can lead to 
improvements in terms of practical performance of transactional systems. In particular, the online algorithms we develop appear to be competitive with offline hand-tuned algorithms. 
We note that the simplicity of the optimal requestor wins strategy--which just chooses a delay uniformly at random within some interval--may lend itself to simple implementation in real systems. 

In terms of future work, we plan to further investigate the practicality of our designs through a more precise HTM implementation, and on a wider series of benchmarks. 
Due to the complexity of implementing a realistic version of RTM~\cite{intel_rtm} accurately on a multiprocessor simulator, this task is beyond the scope of the current work. 
Second, we aim to further refine our conflict model, and to examine whether it is possible to provide guarantees on the competitive ratio under more general conflict models.


\begin{thebibliography}{10}

\bibitem{intel_rtm}
Intel architecture instruction set extensions programming reference, chapter 8,
  2013.

\bibitem{WCH}
William cleaburn hasenplaugh, personal communication, 2017.

\bibitem{kalai_vempala}
Santosh~Vempala Adam~Kalai.
\newblock Efficient algorithms for online decision problems.
\newblock {\em Journal of Computer and System Sciences}, 71:291--307, 2005.

\bibitem{albers_scheduling}
Susanne Albers.
\newblock Better bounds for online scheduling.
\newblock {\em SIAM Journal on Computing}, 29:459--473, 1999.

\bibitem{albers_caching}
Susanne Albers.
\newblock Generalized connection caching.
\newblock {\em Theory of Computing Systems}, 35:251--267, 2002.

\bibitem{albers}
Susanne Albers.
\newblock {\em Online Algorithms: A Survey}.
\newblock Springer-Verlag, 2003.

\bibitem{self_organizing}
Susanne Albers and Jeffery Westbrook.
\newblock Self-organizing data structures.
\newblock In {\em In}, pages 13--51. Springer, 1998.

\bibitem{constrained_skirental}
Murali~Kodialamy Ali~Khanafer and Krishna P.~N. Puttaswamy.
\newblock The constrained ski-rental problem and its application to online
  cloud cost optimization.
\newblock In {\em INFOCOM, 2013 Proceedings IEEE}, 2013.

\bibitem{req_wins_vs_aborts}
Adri\`{a} Armejach, Ruben Titos-Gil, Anurag Negi, Osman~S. Unsal, and
  Adri\'{a}n Cristal.
\newblock Techniques to improve performance in requester-wins hardware
  transactional memory.
\newblock {\em ACM Trans. Archit. Code Optim.}, 10(4):42:1--42:25, December
  2013.

\bibitem{attiya}
Hagit Attiya and Alessia Milani.
\newblock Transactional scheduling for read-dominated workloads.
\newblock In {\em Proceedings of the 13th International Conference on
  Principles of Distributed Systems}, OPODIS '09, pages 3--17, Berlin,
  Heidelberg, 2009. Springer-Verlag.

\bibitem{pathologies}
Jayaram Bobba, Kevin~E. Moore, Haris Volos, Luke Yen, Mark~D. Hill, Michael~M.
  Swift, and David~A. Wood.
\newblock Performance pathologies in hardware transactional memory.
\newblock In {\em Proceedings of the 34th Annual International Symposium on
  Computer Architecture}, ISCA '07, pages 81--91, New York, NY, USA, 2007. ACM.

\bibitem{dooly1998tcp}
Daniel~R Dooly, Sally~A Goldman, and Stephen~D Scott.
\newblock Tcp dynamic acknowledgment delay (extended abstract): theory and
  practice.
\newblock In {\em Proceedings of the thirtieth annual ACM symposium on Theory
  of computing}, pages 389--398. ACM, 1998.

\bibitem{kserver}
Ravid~Y. Fiat~A., Rabani~Y.
\newblock Competitive k-server algorithms.
\newblock In {\em 31st Annual IEEE Symposium on Foundations of Computer
  Science}, pages 454--463, 1990.

\bibitem{ghp05}
Rachid Guerraoui, Maurice Herlihy, and Bastian Pochon.
\newblock Toward a theory of transactional contention managers.
\newblock In {\em Proceedings of the twenty-fourth annual ACM symposium on
  Principles of distributed computing}, pages 258--264. ACM, 2005.

\bibitem{lease_release}
Syed~Kamran Haider, William Hasenplaugh, and Dan Alistarh.
\newblock Lease/release: Architectural support for scaling contended data
  structures.
\newblock {\em SIGPLAN Not.}, 51(8):17:1--17:12, February 2016.

\bibitem{stanford_tm}
David~Cheriton Heiner~Litz, Omid~Azizi Amin~Firozshahian, and J.~Peter
  Stevenson.
\newblock Si-tm: Improving transactional memory abort rates through snapshot
  isolation.
\newblock In {\em 19th International Conference on Architectural Support for
  Programming Languages and Operating Systems (ASPLOS'14)}, March 2014.

\bibitem{DSTM}
Maurice Herlihy, Victor Luchangco, Mark Moir, and William~N. Scherer, III.
\newblock Software transactional memory for dynamic-sized data structures.
\newblock In {\em Proceedings of the Twenty-second Annual Symposium on
  Principles of Distributed Computing}, PODC '03, pages 92--101, New York, NY,
  USA, 2003. ACM.

\bibitem{herlihy_moss}
Maurice Herlihy and J.~Eliot~B. Moss.
\newblock Transactional memory: Architectural support for lock-free data
  structures.
\newblock {\em SIGARCH Comput. Archit. News}, 21(2):289--300, May 1993.

\bibitem{karlin2001dynamic}
Anna~R Karlin, Claire Kenyon, and Dana Randall.
\newblock Dynamic tcp acknowledgement and other stories about e/(e-1).
\newblock In {\em Proceedings of the thirty-third annual ACM symposium on
  Theory of computing}, pages 502--509. ACM, 2001.

\bibitem{karlin}
Anna~R Karlin, Mark~S Manasse, Lyle~A McGeoch, and Susan Owicki.
\newblock Competitive randomized algorithms for nonuniform problems.
\newblock {\em Algorithmica}, 11(6):542--571, 1994.

\bibitem{karlin86}
Anna~R. Karlin, Mark~S. Manasse, Larry Rudolph, and Daniel~D. Sleator.
\newblock Competitive snoopy caching.
\newblock In {\em Proceedings of the 27th Annual Symposium on Foundations of
  Computer Science}, SFCS '86, pages 244--254, Washington, DC, USA, 1986. IEEE
  Computer Society.

\bibitem{graphite}
Jason~E Miller, Harshad Kasture, George Kurian, Charles Gruenwald~III, Nathan
  Beckmann, Christopher Celio, Jonathan Eastep, and Anant Agarwal.
\newblock Graphite: A distributed parallel simulator for multicores.
\newblock In {\em High Performance Computer Architecture (HPCA), 2010 IEEE 16th
  International Symposium on}, pages 1--12. IEEE, 2010.

\bibitem{req_wins}
S.~Park, M.~Prvulovic, and C.~J. Hughes.
\newblock Pleasetm: Enabling transaction conflict management in requester-wins
  hardware transactional memory.
\newblock In {\em 2016 IEEE International Symposium on High Performance
  Computer Architecture (HPCA)}, pages 285--296, March 2016.

\bibitem{seiden}
Steven~S. Seiden.
\newblock A guessing game and randomized online algorithms.
\newblock In {\em Proceedings of the Thirty-second Annual ACM Symposium on
  Theory of Computing}, STOC '00, pages 592--601, New York, NY, USA, 2000. ACM.

\bibitem{yoo2008}
Richard~M. Yoo and Hsien-Hsin~S. Lee.
\newblock Adaptive transaction scheduling for transactional memory systems.
\newblock In {\em Proceedings of the Twentieth Annual Symposium on Parallelism
  in Algorithms and Architectures}, SPAA '08, pages 169--178, New York, NY,
  USA, 2008. ACM.

\end{thebibliography}
\end{document}